\title{Interference Coordination via Power Domain Channel Estimation}
\author
  {Chao Zhang $^\dagger$, Vineeth S. Varma $^*$, Samson Lasaulce  $^\dagger$, and Rapha$\ddot{\mathrm{e}}$l Visoz  $^\ddagger$
	\thanks{$^\dagger$  L2S (CNRS-CentraleSupelec-Univ. Paris Sud), Gif-sur-Yvette, France.}
	\thanks{$^*$ CRAN (CNRS-Univ. of Lorraine), Nancy, France.}
\thanks{$^\ddagger$ Orange Labs, Issy-les-Moulineaux, France.}
\thanks{The material in this paper was presented in part at the 2015 EUSIPCO Conference \cite{varma-eusipco-2015}.}}
\tikzset{
  basic/.style  = {draw, text width=2cm, drop shadow, font=\sffamily, rectangle},
  root/.style   = {basic, rounded corners=2pt, thin, align=center,
                   fill=green!30},
  level 2/.style = {basic, rounded corners=6pt, thin,align=center, fill=white!60,
                   text width=10em},
  level 3/.style = {basic, thin, align=left, fill=pink!60, text width=6.5em}
}
\definecolor{darkred}{rgb}{0.5,0.0,0.0}
\definecolor{darkblue}{rgb}{0.2,0.1,0.6}
\newcommand{\ul}{\underline}
\newcommand{\Gm}{\textbf{G}}
\newtheorem{theorem}{Theorem}[section]
\newtheorem{proposition}[theorem]{Proposition}
\begin{document}

\maketitle
\vspace{-0.15cm}
\vspace{-0.15cm}
\vspace{-0.15cm}
\vspace{-0.15cm}
\vspace{-0.15cm}
\begin{abstract}
A novel technique is proposed which enables each transmitter to acquire global channel state information (CSI) from the sole knowledge of individual received signal power measurements, which makes dedicated feedback or inter-transmitter signaling channels unnecessary. To make this possible, we resort to a completely new technique whose key idea is to exploit the transmit power levels as symbols to embed information and the observed interference as a communication channel the transmitters can use to exchange coordination information. Although the used technique allows any kind of {low-rate} information to be exchanged among the transmitters, the focus here is to exchange local CSI. The proposed procedure also comprises a phase which allows local CSI to be estimated. Once an estimate of global CSI is acquired by the transmitters, it can be used to optimize any utility function which depends on it. While algorithms which use the same type of measurements such as the iterative water-filling algorithm (IWFA) implement the sequential best-response dynamics (BRD) applied to individual utilities, here, thanks to the availability of global CSI, the BRD can be applied to the sum-utility. Extensive numerical results show that significant gains can be obtained and, this, by requiring no additional online signaling.

\end{abstract}
\vspace{-0.15cm}
\section{Introduction}

Interference networks are wireless networks which are largely distributed decision-wise or information-wise. In the case of distributed power allocation over interference networks with multiple bands, the iterative water-filling algorithm (IWFA) is considered to be one of the well-known state-of-the art distributed techniques \cite{yu-jsac-2002}\cite{scutari-tsp-2009}\cite{mertikopolous-jsac-2012}. IWFA-like distributed algorithms have at least two attractive features: they only rely on local knowledge e.g., the individual signal-to-interference plus noise ratio (SINR), making them distributed information-wise; the involved computational complexity is typically low. On the other hand, one drawback of IWFA and many other distributed iterative and learning algorithms (see e.g., \cite{rose-cm-2011}\cite{lasaulce-book-2011}) is that convergence is not always ensured \cite{mertikopolous-jsac-2012} and, when converging, it leads to a Nash point which is  globally inefficient. 

One of the key messages of the present paper is to show that it is possible to exploit the available feedback signal more efficiently than IWFA-like distributed algorithms do. In the exploration phase\footnote{IWFA operates over a period which is less than the channel coherence time and it does so in two steps: an exploration phase during which the transmitters update in a round robin manner their power allocation vector; an exploitation phase during which the transmitters keep their power vector constant  at the values obtained at the end of the exploration phase. {As for IWFA, unless mentioned otherwise, we will assume the number of time-slots of the exploitation phase to be much larger than that of the exploration phase, making the impact of the exploration phase on the average performance negligible.}

}, instead of using local observations (namely, the individual feedback) to allow the transmitters to converge to a Nash point, one can use them to acquire global channel state information (CSI). This allows coordination to be implemented, and more precisely global performance criteria or network utility to be optimized during the exploitation phase. As for complexity, it has to be managed by a proper choice of the network utility function which has to be maximized.

To obtain global CSI, one of the key ideas of this paper is to exploit the transmit power levels as information symbols and to exploit the interference observed to decode these information symbols. In the literature of power control and resource allocation, there exist papers where the observation of interference is exploited to optimize a given performance criterion. In this respect, an excellent monograph on power control is \cite{chiang-book-2007}. Very relevant references include \cite{stanczak2007} and \cite{schreck2015}. In \cite{stanczak2007}, optimal power control for a reversed network (receivers can transmit) is designed, in which the receiver uses the interference to estimate the cross channel, assuming perfect exchange of information between the transmitters. In \cite{schreck2015}, the authors estimate local CSI from the received signal but in the signal domain and in a centralized setting. To the best to the authors' knowledge, there is no paper where the interference measurement is exploited as a communication channel the transmitters can utilize to exchange information or local CSI (namely, the channel gains of the links which arrive to a given receiver), as is the case under investigation. In fact, we provide a complete estimation procedure which relies on the sole knowledge of the individual received signal strength indicator (RSSI). The proposed approach is somewhat related to the Shannon-theoretic work on coordination available in \cite{larousse}\cite{larrousse-tit-sub-2015}, which concerns two-user interference channels when one master transmitter knows the future realizations of the global channel state. 


{It is essential to insist on the fact that the purpose of the proposed estimation scheme is not to compete with conventional estimation schemes such as \cite{caire-tit-2010} (which are performed in the signal-domain), but rather, to evaluate the performance of an estimation scheme that solely relies on information available in the power-domain. Indeed, one of the key results of the paper is to prove that global CSI (without phase information)} can be acquired from the \textbf{sole} knowledge of a \textbf{given} feedback which is the SINR or RSSI feedback. The purpose of such a feedback is generally to adjust the power control vector or matrix but, to our knowledge, it has not been shown that it also allows global CSI to be recovered, and additionally, at every transmitter. This sharply contrasts with conventional channel estimation techniques which operate in the \textbf{signal} domain and use a \textbf{dedicated} channel for estimation.

The main contributions and novelty of this work are as follows:\\
$\blacktriangleright$ {We introduce the important and novel idea of communication in the power domain, i.e., by encoding the message on the transmit power and decoding by observing the received signal strength. This can be used in fact to \textbf{exchange any kind of low-rate information} and not only CSI.}\\
$\blacktriangleright$ This allows interfering transmitters to exchange information \textbf{without requiring the presence of dedicated signaling channels} (like direct inter-transmitter communication), which may be unavailable in real systems (e.g., in conventional Wifi systems or heterogeneous networks).\\
$\blacktriangleright$  Normal (say high-rate) communication can be done even during the proposed learning phase with a sub-optimal power control, i.e., communication during the learning time in the proposed scheme is similar to communication in the convergence time for algorithms like IWFA.\\
$\blacktriangleright$ We propose a way to both learn and exchanged the local CSI. Global CSI is acquired at every transmitter by observing the RSSI feedback. \\
$\blacktriangleright$ The proposed technique accounts for the presence of various noise sources which are non-standard and affect the RSSI measurements (the corresponding modeling is provided in Sec. II). By contrast, apart from a very small fraction of works (such as \cite{mertikopolous-jsac-2012}\cite{anandkumar-tvt-2011}\cite{coucheney-isit-2014}), IWFA-like algorithms assume noiseless measurements.\\
$\blacktriangleright$ We conduct a detailed performance analysis to assess the benefits of the proposed approach for the exploitation phase, which aims at optimizing the sum-rate or sum-energy-efficiency. As (imperfect) global CSI is available, globally efficient solutions become attainable. The proposed work can be extended in many respects; the main extensions are marked as ($\star$).

\vspace{-0.15cm}
\section{Problem statement and proposed technique general description}
\label{sec:sysm}

\textbf{Channel and communication model}: The system under consideration comprises $K \geq 2$ pairs of interfering transmitters and receivers; each transmitter-receiver pair will be referred to as a user. Our technique directly applies to the multi-band case, and this has been done in the numerical section. In particular, we assess the performance gain which can be obtained with respect to the IWFA. However, for the sake of clarity and ease of exposition, we focus on the single-band case, and explain in the end of Sec. IV, the modifications required to treat the multi-band case. From this point on, we will therefore assume the single-band case unless otherwise stated.

In the setup under study, the quantities of interest for a transmitter to control its power are given by the channel gains. The channel gain of the link between Transmitter $i \in \{1,...,K\}$ and Receiver $j \in \{1,...,K\}$ is denoted by $g_{ij}= |h_{ij}|^2$, where $h_{ij}$ may typically be the realization of a complex Gaussian random variable, if Rayleigh fading is considered. In several places in this paper we will use the $K \times K$ \textit{channel matrix} $\Gm$ whose entries are given by the channel gains $g_{ij}$, $i$ and $j$ respectively representing the row and column indices of $\Gm$.  Each channel gain is assumed to obey a classical block-fading variation law. More precisely, channel gains are assumed to be constant over each transmitted data frame. A \textit{frame} comprises $T_{\mathrm{I}}+T_{\mathrm{II}}+T_{\mathrm{III}}$ consecutive time-slots where $T_m \in \mathbb{N}$, $m\in \{\mathrm{I}, \mathrm{II}, \mathrm{III}\}$, corresponds to the number of time-slots of Phase $m$ of the proposed procedure; these phases are described further. Transmitter $i$, $i\in\{1,...,K\}$, can update its power from time-slot to time-slot. The corresponding power level is denoted by $p_i$ and is assumed to be subject to power limitation as: $0 \leq p_i \leq P_{\max}$. The $K-$dimensional column vector formed by the transmit power levels will be  denoted by $\ul{p}=(p_1,...,p_K)^{\mathrm{T}}$, $\mathrm{T}$ standing for the transpose operator.\\ 
\vspace{-0.15cm}
\textbf{Feedback signal model}: We assume the existence of a feedback mechanism which provides each transmitter, an image or noisy version of the power received at its intended receiver for each time-slot. The power at Receiver $i$ on time-slot $t$ is expressed as
\vspace{-0.15cm}
\begin{equation}\label{eq:real-omega}
\omega_{i}(t)  =  g_{ii}  p_{i}(t) + \sigma^2+  \sum_{j \neq i} g_{ji}  p_{j}(t). 
\end{equation}
\vspace{-0.15cm}
where $\sigma^2$ is the receive noise variance and $p_i(t)$ the power of Transmitter $i$ on time-slot $t$. We assume that the following procedure is followed by the transmitter-receiver pair. Receiver $i$: measures the received signal (RS) power $\omega_i(t)$ at each time slot and quantizes it with $N$ bits (the \textit{RS power quantizer} is denoted by $\mathcal{Q}_{\mathrm{RS}}$); sends the quantized RS power $\widehat{\omega}_i(t)$ as feedback to Transmitter $i$ through a noisy feedback channel. After quantization, we assume that for all $i \in \{1,...,K\}$, $\widehat{\omega}_i(t) \in W$, where $W = \{ \mathrm{w}_1 , \mathrm{w}_2,\dots, \mathrm{w}_M \}$ such that $0 \leq \mathrm{w}_1 < \mathrm{w}_2 < \dots < \mathrm{w}_M $ and $M=2^N$. Transmission over the feedback channel and the dequantization operation are represented by a discrete memoryless channel (DMC) whose \textit{conditional probability} is denoted by $\Gamma$. The distorted and noisy version\footnote{Note that, for the sake of clarity, it is assumed here that the RS power quantizer and DMC are independent of the user index, but the proposed approach holds in the general case.} of $\omega_{i}(t)$, which is available at Transmitter $i$, is denoted by ${\widetilde{\omega}}_i(t) \in W$; the quantity ${\widetilde{\omega}}_i(t)$ will be referred to as the \textit{received signal strength indicator (RSSI)}. With these notations, the probability that Transmitter $i$ decodes the symbol $\mathrm{w}_\ell$ given that Receiver $i$ sent the quantized RS power $\mathrm{w}_k$ equals $\Gamma(\mathrm{w}_\ell | \mathrm{w}_k )$. 


In contrast with the vast majority of works on power control and especially those related to the IWFA, we  assume the feedback channel to be noisy. Note also that these papers typically assume SINR feedback whereas the RSSI is considered here. The reasons for this is fourfold: 1) if Transmitter $i$ knows $p_i (t)$, $g_{ii} (t)$, and has SINR feedback, this amounts to knowing its RS power since $\omega_{i}(t) =  \displaystyle{g_{ii}  p_{i} (t)  \left(1+ \frac{1}{\mathrm{SINR}_{i}  (t)} \right)}$ where $\mathrm{SINR}_{i}(t)   = \frac{ g_{ii}  p_{i}(t)}{\displaystyle{ \sigma^2+\sum_{j \neq i} g_{ji}  p_{j}(t) }}$; 2) Assuming an RS power feedback is very relevant in practice since some existing wireless systems exploit the RSSI feedback signal (see e.g., \cite{sesia-book-2009}); 3) The SINR is subject to higher fluctuations than the RS power, which makes SINR feedback less robust to distortion and noise effects and overall less reliable; 4) As a crucial technical point, it can be checked that using the SINR as the transmitter observation leads to complex estimators \cite{patentpm}, while the case of RS power observations leads to a simple and very efficient estimation procedure, as shown further in this paper.\\
\vspace{-0.15cm}
Note that, here, it is assumed that the RS power is quantized and then transmitted through a DMC, which is a reasonable and common model for wireless communications. Another possible model for the feedback might consist in assuming that the receiver sends directly received signal power over an AWGN channel; depending on how the feedback channel gain fluctuations may be accounted for, the latter model might be more relevant and would deserve to be explored as well ($\star$).

%

\textbf{Proposed technique general description}: The general power control problem of interest consists in finding, for each realization of the channel gain matrix $\Gm$, a power vector which maximizes a network utility of the form $u(\ul{p};\Gm)$. For this purpose, each transmitter is assumed to have access to the realizations of its RSSI over a frame. One of the key ideas of this paper is to exploit the transmit power levels as information symbols and exploit the observed interference (which is observed through the RSSI or SINR feedback) for inter-transmitter communication. The corresponding implicit communication channel is exploited to acquire global CSI knowledge namely, the matrix $\Gm$ and therefore to perform operations such as the maximization of $u(\ul{p};\Gm)$.
\begin{wrapfigure}{r}{0.5\textwidth}
   \begin{center}
\includegraphics[width=0.5\textwidth]{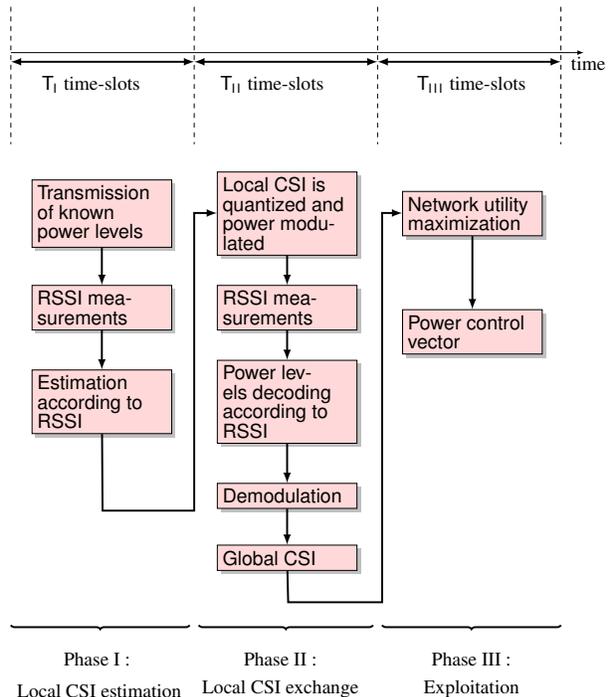}
  \end{center}
   \caption{\small The flowchart of the proposed scheme}
    \label{fig:flowc}
\end{wrapfigure} 
The process of achieving the desired power control vector is divided into three phases (see Fig.~\ref{fig:flowc}). In Phase I, a sequence of power levels which is known to all the transmitters is transmitted (similar to a training sequence in classical channel estimation but in the power domain), and Transmitter $i$ estimates its own channel gains (i.e., $g_{1i}, g_{2i},...,g_{Ki}$) by exploiting the noisy RSSI feedback; we refer to the corresponding channel gains as \textit{local CSI}. In Phase II, each transmitter informs the other transmitters about its local CSI by using power modulation. By decoding the modulated power, each transmitter can estimate the channel gains of the other users and thus, at the end of Phase II each transmitter has its own estimate of the \textit{global CSI} $\Gm$; the situation where transmitters have a non-homogeneous or different knowledge of global CSI is referred to as a distributed CSI scenario in \cite{Kerret}. In Phase III, each transmitter can then exploit global CSI to maximize (possibly in a sub-optimal manner) the network utility of interest. In the numerical part, we make specific and classical choices for the network utility namely, we consider the network sum-rate and network sum-energy-efficiency.

%

\section{Phase I: Local CSI estimation in the power domain}
%
%
%
%

Phase $\mathrm{I}$ comprises $T_{\mathrm{I}}$ time-slots. The aim of Phase $\mathrm{I}$ is to allow Transmitter $i$, $i\in\{1,...,K\}$, to acquire local CSI from the $T_{\mathrm{I}}$ observations $\widetilde{\omega}_i(1),...,\widetilde{\omega}_i(T_{\mathrm{I}})$ which are available thanks to the feedback channel between Receiver $i$ and Transmitter $i$. Obviously, if local CSI is already available e.g., because another estimation mechanism is available, Phase $\mathrm{I}$ can be skipped and one can directly proceed with the local CSI exchange among the transmitters namely, performing Phase $\mathrm{II}$. 

For every time-slot of Phase $\mathrm{I}$, each transmitter transmits at a prescribed power level which is assumed to be known to all the transmitters. One of the key observations we make in this paper is that, when the channel gains are constant over several time-slots, it is possible to recover local CSI from the RSSI or SINR; this means that, as far as power control is concerned, there is no need for additional signaling from the receiver for local CSI acquisition by the transmitter. Thus, the sequences of power levels in Phase $\mathrm{I}$ can be seen as training sequences. Technically, a difference between classical training-based estimation and Phase $\mathrm{I}$ is that estimation is performed in the power domain and over several time-slots and not in the symbol domain (symbol duration is typically much smaller than the duration of a time-slot) within a single time-slot. Also note that working in the symbol domain would allow one to have access to $h_{ij}$ but the phase information on the channel coefficients is irrelevant for the purpose of maximizing a utility function of the form $u(\ul{p};\Gm)$. Another technical difference stems from the fact that the feedback noise is not standard, which is commented more a little further.

By denoting $ (p_i(1),...,p_i(   T_{\mathrm{I}}  ))$,   $i \in \{1,...,K\}$, the sequence of training power levels used by Transmitter $i$, the following training matrix can be defined:
\begin{equation}
\mathbf{P}_{\mathrm{I}}= 
\left(
\begin{array}{ccc}
 p_{1} (1) &  \dots & p_{K} (1) \\
  \vdots &     \vdots & \vdots \\
 p_{1} (T_{\mathrm{I}}  ) & \dots & p_{K} ( T_{\mathrm{I}} )
\end{array}
\right).
\end{equation}
With the above notations, the \textit{noiseless RS power vector} $\ul{\omega}_i = (\omega_i(1),...,\omega_i(T_{\mathrm{I}} ))^\mathrm{T}$ can be expressed as:
\begin{equation}
\underline{\omega}_i=  \mathbf{P}_{ \mathrm{I}   }    \ul{g_{i}} + \sigma^2 \ul{1}.
\end{equation}
where $\underline{g}_i = (g_{1i},..,g_{Ki})^\mathrm{T}$ and $\ul{1} = (1,1,...,1)^\mathrm{T}$. 

To estimate the local CSI $\underline{g}_i$ from the sole knowledge of the \textit{noisy RS power vector or RSSI}  $\underline{\widetilde{\omega}}_i$ we propose to use the least-squares (LS) estimator in the power domain (PD), abbreviated as LSPD, to estimate the local CSI as:
\vspace{-0.15cm}
\begin{equation}
\ul{{\widetilde{g}}}_i^{\mathrm{LSPD}}=
 \left(   \mathbf{P}_{ \mathrm{I} }^{\mathrm{T}}
  \mathbf{P}_{ \mathrm{I} }  \right)^{-1} \mathbf{P}_{ \mathrm{I} }^{\mathrm{T}}  \left(
\underline{\widetilde{\omega}}_{i}  - \sigma^2 \ul{1} \right).
\label{eq:LS}
\end{equation}
where $\sigma^2$ is assumed to be known from the transmitters since it can always be estimated through conventional estimation procedures (see e.g., \cite{lasaulcevtc}). Using the LSPD estimate for local CSI therefore assumes that the training matrix $\mathbf{P}_{ \mathrm{I} }$ is chosen to be pseudo-invertible. A necessary condition for this is that the number of time-slots used for Phase $\mathrm{I}$ verifies: $T_{\mathrm{I}}  \geq K$. Using a diagonal training matrix allows this condition to be met and to simplify the estimation procedure. 

It is known that the LSPD estimate may coincide with the maximum likelihood (ML) estimate. This holds for instance when the observation model of the form $\widetilde{\omega}_i =\omega_i+z$ where $z$ is an independent and additive white Gaussian noise. In the setup under investigation, $z$ represents both the effects of quantization and transmission errors over the feedback channels and does not meet neither the independence nor the Gaussian assumption. However, we have identified a simple and sufficient condition under which the LSPD estimate maximizes the \textit{likelihood} $P(\widetilde{\ul{\omega}}_i | \ul{g}_i)$. This is the purpose of the next proposition.
\vspace{-0.15cm}
\begin{proposition} Denote by $\mathcal{G}_{i}^{\mathrm{ML}} $ the set of ML estimates of $\ul{{g}}_i$, then we have
$$
\begin{array}{cl}
(i)  &  \mathcal{G}_i^{\mathrm{ML}}=  \arg \underset{\underline{g}_{i}}{\max} \overset{T_{\mathrm{I}}}{\underset{t=1}{\displaystyle\prod}}\Gamma\left(\widetilde{\omega}_{i}\left(t\right) \left|Q_{\mathrm{RS}}\left(\ul{e}_t^{\mathrm{T}}   \mathbf{P}_{ \mathrm{I} } \underline{g}_{i}+\sigma^2\right) \right. \right);\\ 
(ii) & \ul{\widetilde{g}}_i^{\mathrm{LSPD}} \in \mathcal{G}_{i}^{\mathrm{ML}} \mathrm{\>\>when\>for\> all\>} \ell,   \ \arg \underset{k}{\max} \:\:\Gamma(\mathrm{w}_\ell |  \mathrm{w}_k )   =   \ell; 
\end{array}
$$
where $\ul{e}_t$ is a column vector whose entries are zeros except for the $t^{\mathrm{th}}$ entry which equals $1$. 
\end{proposition}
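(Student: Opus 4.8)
For part~(i), the plan is to unfold the feedback model of Section~II. By~(3), the noiseless RS power at Receiver~$i$ on time-slot~$t$ is the $t$-th entry of $\mathbf{P}_{\mathrm{I}}\ul{g}_i+\sigma^2\ul{1}$, namely $\ul{e}_t^{\mathrm{T}}\mathbf{P}_{\mathrm{I}}\ul{g}_i+\sigma^2$; Receiver~$i$ quantizes it to $Q_{\mathrm{RS}}(\ul{e}_t^{\mathrm{T}}\mathbf{P}_{\mathrm{I}}\ul{g}_i+\sigma^2)$ and then forwards it through the DMC~$\Gamma$. Because the DMC is memoryless and the quantization-plus-feedback perturbation acts slot-by-slot, the conditional law factorizes as $P(\widetilde{\ul{\omega}}_i\mid\ul{g}_i)=\prod_{t=1}^{T_{\mathrm{I}}}\Gamma\big(\widetilde{\omega}_i(t)\,\big|\,Q_{\mathrm{RS}}(\ul{e}_t^{\mathrm{T}}\mathbf{P}_{\mathrm{I}}\ul{g}_i+\sigma^2)\big)$, and taking the $\arg\max$ over $\ul{g}_i$ gives the stated description of $\mathcal{G}_i^{\mathrm{ML}}$.

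For part~(ii) I would use a matching-upper-bound argument. Since $Q_{\mathrm{RS}}(\cdot)$ always takes a value in $W=\{\mathrm{w}_1,\dots,\mathrm{w}_M\}$, every factor of the likelihood obeys $\Gamma\big(\widetilde{\omega}_i(t)\mid Q_{\mathrm{RS}}(\cdot)\big)\le\max_k\Gamma\big(\widetilde{\omega}_i(t)\mid\mathrm{w}_k\big)$, hence $P(\widetilde{\ul{\omega}}_i\mid\ul{g}_i)\le\prod_{t}\max_k\Gamma(\widetilde{\omega}_i(t)\mid\mathrm{w}_k)$ for every $\ul{g}_i$. Writing $\widetilde{\omega}_i(t)=\mathrm{w}_{\ell(t)}$, the hypothesis $\arg\max_k\Gamma(\mathrm{w}_\ell\mid\mathrm{w}_k)=\ell$ says precisely that the per-slot maximum is attained at $\mathrm{w}_k=\widetilde{\omega}_i(t)$, so the upper bound equals $\prod_t\Gamma(\widetilde{\omega}_i(t)\mid\widetilde{\omega}_i(t))$. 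Consequently, it suffices to exhibit one $\ul{g}_i$ for which $Q_{\mathrm{RS}}(\ul{e}_t^{\mathrm{T}}\mathbf{P}_{\mathrm{I}}\ul{g}_i+\sigma^2)=\widetilde{\omega}_i(t)$ for every $t$: such a vector attains the upper bound and therefore belongs to $\mathcal{G}_i^{\mathrm{ML}}$.

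It then remains to check that $\ul{\widetilde{g}}_i^{\mathrm{LSPD}}$ is such a vector. When $\mathbf{P}_{\mathrm{I}}$ is chosen so that the least-squares residual vanishes — which holds in the square invertible case $T_{\mathrm{I}}=K$, and in particular for the diagonal training matrix recommended in the text — one has $\mathbf{P}_{\mathrm{I}}\ul{\widetilde{g}}_i^{\mathrm{LSPD}}+\sigma^2\ul{1}=\mathbf{P}_{\mathrm{I}}(\mathbf{P}_{\mathrm{I}}^{\mathrm{T}}\mathbf{P}_{\mathrm{I}})^{-1}\mathbf{P}_{\mathrm{I}}^{\mathrm{T}}(\widetilde{\ul{\omega}}_i-\sigma^2\ul{1})+\sigma^2\ul{1}=\widetilde{\ul{\omega}}_i$, hence $\ul{e}_t^{\mathrm{T}}\mathbf{P}_{\mathrm{I}}\ul{\widetilde{g}}_i^{\mathrm{LSPD}}+\sigma^2=\widetilde{\omega}_i(t)$; since $\widetilde{\omega}_i(t)\in W$ and requantizing a codeword of $W$ returns that codeword, we get $Q_{\mathrm{RS}}(\ul{e}_t^{\mathrm{T}}\mathbf{P}_{\mathrm{I}}\ul{\widetilde{g}}_i^{\mathrm{LSPD}}+\sigma^2)=\widetilde{\omega}_i(t)$, which closes part~(ii). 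The main obstacle — and essentially the only delicate point — is exactly this last step: the argument needs the LSPD estimate to regenerate the observed RSSI vector \emph{exactly} after requantization, which is automatic for a square or diagonal $\mathbf{P}_{\mathrm{I}}$ but, for an overdetermined $\mathbf{P}_{\mathrm{I}}$ with $T_{\mathrm{I}}>K$, would require the extra assumption that $\widetilde{\ul{\omega}}_i-\sigma^2\ul{1}$ lies in the column space of $\mathbf{P}_{\mathrm{I}}$; I would also make explicit the mild convention that $Q_{\mathrm{RS}}$ fixes the points of its own codebook $W$.
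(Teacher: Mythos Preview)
Your proposal is correct and follows essentially the same route as the paper's proof: for~(i) you factorize the likelihood through the Markov chain $\ul{g}_i\to\widehat{\ul{\omega}}_i\to\widetilde{\ul{\omega}}_i$ and the memorylessness of the DMC, exactly as the paper does; for~(ii) your matching-upper-bound argument is a slightly cleaner repackaging of the paper's step, which first rewrites $\mathcal{G}_i^{\mathrm{ML}}$ under the hypothesis as $\{\ul{g}_i:\ Q_{\mathrm{RS}}(\ul{e}_t^{\mathrm{T}}\mathbf{P}_{\mathrm{I}}\ul{g}_i+\sigma^2)=\widetilde{\omega}_i(t)\ \forall t\}$ and then checks that the LSPD estimate lies in this set via $\mathbf{P}_{\mathrm{I}}\ul{\widetilde{g}}_i^{\mathrm{LSPD}}+\sigma^2\ul{1}=\widetilde{\ul{\omega}}_i$. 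You are in fact more careful than the paper on two points it leaves implicit: the identity $\mathbf{P}_{\mathrm{I}}\ul{\widetilde{g}}_i^{\mathrm{LSPD}}+\sigma^2\ul{1}=\widetilde{\ul{\omega}}_i$ indeed requires the square (e.g., diagonal) case $T_{\mathrm{I}}=K$, and the final step does rely on $Q_{\mathrm{RS}}$ fixing the points of its own codebook $W$.
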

\begin{proof}
See Appendix A.
\end{proof}
\vspace{-0.15cm}
The sufficient condition corresponding to $(ii)$ is clearly met in classical practical scenarios. Indeed, as soon as the probability of correctly decoding the sent quantized RS power symbol (which is sent by the receiver) at the transmitter exceeds $50\%$, the above condition is verified. It has to be noted that $\mathcal{G}_i^{\mathrm{ML}}$ is not a singleton set in general, which indicates that even if the LSPD estimate maximizes the likelihood, the set $\mathcal{G}_i^{\mathrm{ML}}$ will typically comprise a solution which can perform better e.g., in terms of mean square error.

If some statistical knowledge on the channel gains is available, it is possible to further improve the performance of the channel estimate.  Indeed, when the probability  of $\ul{g}_i$ is known it becomes possible (up to possible complexity limitations) to minimize the mean square error $\mathbb{E} \|\underline{\widehat{g}}_i-\underline{g}_i\|^2$. The following proposition provides the expression of the minimum mean square error (MMSE) estimate in the power domain (PD) . 
\vspace{-0.15cm}
\begin{proposition} Assume that $\forall i \in \{1,...,K\}$, $\ul{\widehat{\omega}}_i$ and  $\ul{\widetilde{\omega}}_i$ belong to the set ${\Omega} = \left\{\ul{\mathrm{w}}_1,..., \ul{\mathrm{w}}_{M^{T_{\mathrm{I}}}} \right\}$, where $\ul{\mathrm{w}}_1 = (\mathrm{w}_1,\mathrm{w}_1, ...,\mathrm{w}_1)^{\mathrm{T}}$, $\ul{\mathrm{w}}_2 = (\mathrm{w}_1, \mathrm{w}_1,...,\mathrm{w}_2)^{\mathrm{T}}$,..., $\ul{\mathrm{w}}_{M^{T_{\mathrm{I}}}} = (\mathrm{w}_M,\mathrm{w}_M, ...,\mathrm{w}_M)^{\mathrm{T}}$ (namely, vectors are ordered according to the lexicographic order and have $T_{\mathrm{I}}$ elements each). Define $\mathcal{G}_{m}$ as
\vspace{-0.15cm}
\begin{equation}
\mathcal{G}_{m}:=\left\{\underline{x} \in \mathbb{R}_{+}^K: \     Q_{\mathrm{RS}}\left(\mathbf{P}_{ \mathrm{I} } \underline{x}+\sigma^2\ul{1}\right)
=\ul{\mathrm{w}}_m\right\}.
\end{equation}
\vspace{-0.15cm}
Then the MMSE estimator in the power domain expresses as:
\begin{equation}\label{eq:MMSEPD}
\ul{\widetilde{g}}_{i}^{\mathrm{MMSEPD}}=\frac{{\displaystyle\sum_{m=1}^{M^{T_{\mathrm{I}}}}}\overset{T_{\mathrm{I}}}{\underset{t=1}{\prod}}\Gamma \left(\widetilde{\omega}_i(t)|\ul{\mathrm{w}}_m(t)\right) \displaystyle\int_{\mathcal{G}_{m}}\phi_i\left(\ul{g}_i\right)
\underline{g}_i{\mathrm{d} g_{1i}...\mathrm{d} g_{Ki}}}{{\displaystyle\sum_{m=1}^{M^{T_{\mathrm{I} }}}}\overset{T_{ \mathrm{I}  }}{\underset{t=1}{\prod}}\Gamma \left(\widetilde{\omega}_i(t)|\ul{\mathrm{w}}_m(t)\right) \displaystyle\int_{\mathcal{G}_{m}} \phi_i\left(\ul{g}_i\right){  \mathrm{d} g_{1i}...\mathrm{d} g_{Ki}  }},
\end{equation}
where $\phi_i$ represents the probability density function (p.d.f.) of $\ul{g}_i$ and $\ul{\mathrm{w}}_m(t)$ is the $t$-th element of $\ul{\mathrm{w}}_m$.  
\end{proposition}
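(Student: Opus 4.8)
The plan is to recognize $\ul{\widetilde{g}}_{i}^{\mathrm{MMSEPD}}$ as the posterior mean $\mathbb{E}\left[\ul{g}_i \mid \ul{\widetilde{\omega}}_i\right]$ and to compute this conditional expectation by exploiting the two-stage structure of the observation model, namely a deterministic quantization followed by a memoryless feedback channel. First I would invoke the standard fact that, for the quadratic risk $\mathbb{E}\|\ul{\widehat{g}}_i-\ul{g}_i\|^2$, the minimizer over all estimators measurable with respect to $\ul{\widetilde{\omega}}_i$ is the conditional expectation of $\ul{g}_i$ given $\ul{\widetilde{\omega}}_i$. Writing this out with Bayes' rule, and using that $\ul{g}_i$ has p.d.f. $\phi_i$ while $\ul{\widetilde{\omega}}_i$ takes values in the finite set $\Omega$,
\[
\ul{\widetilde{g}}_{i}^{\mathrm{MMSEPD}} = \frac{\displaystyle\int_{\mathbb{R}_{+}^K}\ul{g}_i\, P\!\left(\ul{\widetilde{\omega}}_i \mid \ul{g}_i\right)\phi_i(\ul{g}_i)\,\mathrm{d} g_{1i}\cdots\mathrm{d} g_{Ki}}{\displaystyle\int_{\mathbb{R}_{+}^K} P\!\left(\ul{\widetilde{\omega}}_i \mid \ul{g}_i\right)\phi_i(\ul{g}_i)\,\mathrm{d} g_{1i}\cdots\mathrm{d} g_{Ki}},
\]
so everything reduces to making the likelihood $P(\ul{\widetilde{\omega}}_i \mid \ul{g}_i)$ explicit.

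For the likelihood, I would argue as follows. Conditioned on $\ul{g}_i$, the noiseless RS power vector $\ul{\omega}_i = \mathbf{P}_{\mathrm{I}}\ul{g}_i + \sigma^2\ul{1}$ is fixed, hence so is its componentwise quantization $\ul{\widehat{\omega}}_i = Q_{\mathrm{RS}}(\mathbf{P}_{\mathrm{I}}\ul{g}_i+\sigma^2\ul{1}) \in \Omega$; by the very definition of the sets $\mathcal{G}_m$, we have $\ul{\widehat{\omega}}_i = \ul{\mathrm{w}}_m$ precisely when $\ul{g}_i \in \mathcal{G}_m$. The only remaining source of randomness is the feedback DMC, and since it is memoryless its $T_{\mathrm{I}}$ outputs are conditionally independent given the transmitted symbols, so $P\!\left(\ul{\widetilde{\omega}}_i \mid \ul{\widehat{\omega}}_i = \ul{\mathrm{w}}_m\right) = \prod_{t=1}^{T_{\mathrm{I}}}\Gamma\!\left(\widetilde{\omega}_i(t)\mid\ul{\mathrm{w}}_m(t)\right)$. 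Composing the deterministic and the stochastic step, $P(\ul{\widetilde{\omega}}_i\mid\ul{g}_i) = \prod_{t=1}^{T_{\mathrm{I}}}\Gamma\!\left(\widetilde{\omega}_i(t)\mid\ul{\mathrm{w}}_m(t)\right)$ for the unique $m$ with $\ul{g}_i\in\mathcal{G}_m$; in other words, the likelihood is a function that is piecewise constant in $\ul{g}_i$, equal to $\prod_t\Gamma(\widetilde{\omega}_i(t)\mid\ul{\mathrm{w}}_m(t))$ throughout the cell $\mathcal{G}_m$.

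The final step is to split each of the two integrals over $\mathbb{R}_{+}^K$ into the sum of its restrictions to the cells $\mathcal{G}_1,\dots,\mathcal{G}_{M^{T_{\mathrm{I}}}}$. Here I would verify that these cells cover $\mathbb{R}_{+}^K$ and that distinct cells intersect only in a set of Lebesgue measure zero — the preimage under the affine map $\ul{x}\mapsto\mathbf{P}_{\mathrm{I}}\ul{x}+\sigma^2\ul{1}$ of the boundaries of the scalar quantization intervals — which is harmless because $\phi_i$ is integrated against Lebesgue measure; empty cells contribute nothing. On each $\mathcal{G}_m$ the factor $P(\ul{\widetilde{\omega}}_i\mid\ul{g}_i)$ is the constant $\prod_t\Gamma(\widetilde{\omega}_i(t)\mid\ul{\mathrm{w}}_m(t))$ and pulls out of the integral, leaving $\sum_m\big[\prod_t\Gamma(\widetilde{\omega}_i(t)\mid\ul{\mathrm{w}}_m(t))\big]\int_{\mathcal{G}_m}\ul{g}_i\phi_i(\ul{g}_i)\,\mathrm{d}g_{1i}\cdots\mathrm{d}g_{Ki}$ in the numerator, and the same expression without the $\ul{g}_i$ factor in the denominator; this is exactly \eqref{eq:MMSEPD}. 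I expect the only genuine subtlety to be this bookkeeping around the cells $\mathcal{G}_m$ (that they tile $\mathbb{R}_{+}^K$ up to a null set, so no probability mass is lost or double-counted, and that empty or unbounded cells cause no trouble), together with stating precisely the conditional-independence property of the memoryless feedback channel that produces the product form in $\Gamma$; once these are in place, the result follows by a routine application of Bayes' rule.
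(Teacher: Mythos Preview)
Your argument is correct and, in fact, cleaner than the paper's. Both proofs rest on the same structural ingredients: the Markov chain $\ul{g}_i \to \ul{\widehat{\omega}}_i \to \ul{\widetilde{\omega}}_i$, the determinism of the quantizer (so that $\ul{\widehat{\omega}}_i=\ul{\mathrm{w}}_m$ exactly on $\mathcal{G}_m$), and the memoryless DMC giving the product form $\prod_t\Gamma(\widetilde{\omega}_i(t)\mid\ul{\mathrm{w}}_m(t))$. The difference is organizational. You invoke up front the standard fact that the Bayes estimator under quadratic loss is the posterior mean $\mathbb{E}[\ul{g}_i\mid\ul{\widetilde{\omega}}_i]$, then compute that posterior mean directly via Bayes' rule and the cell decomposition. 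The paper instead parameterizes the estimator by a lookup table $h(\ul{\mathrm{w}}_n)=\ul{\mathrm{g}}_n$, writes the full mean-square error $\mathbb{E}\|\ul{\widetilde{g}}_i-\ul{g}_i\|^2$ as a sum over $n,m$ of weighted integrals over the $\mathcal{G}_m$, and then differentiates with respect to each representative $\ul{\mathrm{g}}_n$ to recover the same ratio. Your route is shorter and avoids the calculus step; the paper's route is more self-contained in that it does not assume the reader already knows the conditional-expectation characterization of the MMSE estimator. Your attention to the measure-zero overlap of the cells $\mathcal{G}_m$ is a nicety the paper omits.
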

\begin{proof}
See Appendix B.
\end{proof}
\vspace{-0.15cm}
In the simulation section (Sec. V), we will compare the LSPD and MMSEPD performance in terms of estimation SNR, sum-rate, and sum-energy-efficiency. While the MMSEPD estimate may provide a quite significant gain in terms of MSE over the LSPD estimate, it also has a much higher computational cost. Simulations reported in Sec. V will exhibit conditions under which choosing the LSPD solution may involve a marginal loss w.r.t. the MMSEPD solution e.g., when the performance is measured in terms of sum-rate. Therefore the choice of the estimator can be made based on the computation capability, the choice of utility for the system under consideration, or the required number of time-slots (MMSEPD allows for a number of time-slots which is less than $K$, whereas this is not possible for LSPD). Note that some refinements might be brought to the proposed estimator e.g., by using a low-rank approximation of the channel vector (see e.g., \cite{lasaulce-vtc-2001}), which is particularly relevant if the channel appears to possess some sparseness.
\vspace{-0.15cm}
\section{Phase II: Local CSI exchange in the power domain}
Phase $\mathrm{II}$ comprises $T_{\mathrm{II}}$ time-slots. The aim of Phase $\mathrm{II}$ is to allow Transmitter $i$, $i\in\{1,...,K\}$, to exchange its knowledge about local CSI with the other transmitters; the corresponding estimate will be merely denoted by $\ul{\widetilde{g}}_i = \left(\widetilde{g}_{1i},..., \widetilde{g}_{Ki} \right)^{\mathrm{T}}$, knowing that it can refer either to the LSPD or MMSEPD estimate. The proposed procedure is as follows and is also summarized in Fig.~3. Transmitter $i$ quantizes the information $\widetilde{\ul{g}}_i$ through a \textit{channel gain quantizer} called $\mathcal{Q}_{i}^{\mathrm{II}}$ and maps the obtained bits (through a modulator) into the sequence of power levels $\ul{p}_i^{\mathrm{II}} = (p_i(T_{\mathrm{I}}+1),..., p_i(T_{\mathrm{I}}+T_{\mathrm{II}}))^{\mathrm{T}}$. From the RSSI observations $\widetilde{\ul{\omega}}_j^{\mathrm{II}} = (\widetilde{\omega}_j(T_{\mathrm{I}}+1),..., \widetilde{\omega}_j(T_{\mathrm{I}}+T_{\mathrm{II}}))^{\mathrm{T}}$, Transmitter $j$ ($j\neq i$) can estimate (through a decoder) the power levels used by Transmitter $i$. To facilitate the corresponding operations, we assume that the used power levels in Phase $\mathrm{II}$ have to lie in  the \textit{reduced set} $\mathcal{P} =\{P_1,...,P_L \}$ with $\forall \ell \in \{1,...,L\}$, $P_\ell \in [0,P_{\max}]$. The estimate Transmitter $j$ has about the channel vector $\ul{g}_i$ will be denoted by $\widetilde{\ul{g}}_i^j = \left(\widetilde{g}_{1i}^j,...,\widetilde{g}_{Ki}^j \right)^{\mathrm{T}}$. The corresponding \textit{channel matrix estimate} is denoted by $\widetilde{\mathbf{G}}^j$. 
\begin{figure}
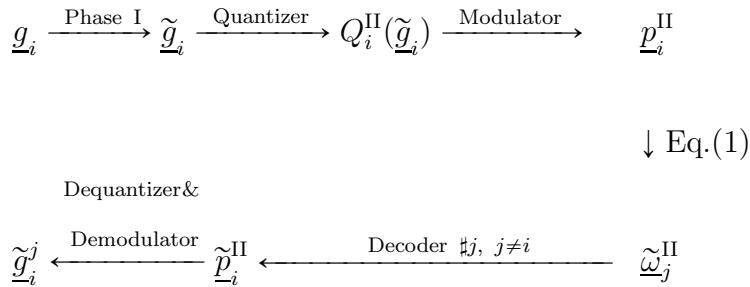
\label{fig:coder}
\begin{equation*}
\begin{array}{ll}
\ul{g}_i \xrightarrow{\mathrm{\ Phase} \ \mathrm{I}} \widetilde{\ul{g}}_i \xrightarrow{\mathrm{ \ Quantizer \ }} Q_{i}^{\mathrm{II}}(\widetilde{\ul{g}}_i) \xrightarrow{\mathrm{ \ Modulator \ }}  & \ul{p}_i^{\mathrm{II}}\\ 
& \:\\
&{{\downarrow  \mathrm{Eq. (1)}}}\\ 
 \widetilde{\ul{g}}_i^j   \xleftarrow{{\scriptsize \begin{split}\mathrm{Dequantizer}   \& \\ \mathrm{Demodulator}\end{split}}}     \widetilde{\ul{p}}_i^{\mathrm{II}}
 
  \xleftarrow{  \ \ \ \      \ \ \ \ \  \mathrm{Decoder} \ \sharp j, \ j\neq i \ \ \ \  \ \ \   }   & \widetilde{\ul{\omega}}_j^{\mathrm{II}}
\end{array}
\end{equation*}
{\caption{\small The figure summarizes the overall processing chain for the CSI}}
\end{figure}
In what follows, we describe the proposed schemes for the three operations required to exchange local CSI namely, quantization, power modulation, and decoding. The situation where transmitters have different estimates of the same channel is referred to as a distributed CSI scenario in \cite{Kerret}. Assessing analytically the impact of distributed CSI on the sum-rate or sum-energy-efficiency is beyond the scope of this paper but constitutes a very relevant extension of it ($\star$); only simulations accounting for the distributed CSI effect will be provided here.\\ 
\vspace{-0.15cm}
\vspace{-0.15cm}

It might be noticed that the communication scenario in Phase II is similar to the X-channel scenario in the sense that each transmitter wants to inform the other transmitters (which play the role of receivers) about its local CSI, and this is done simultaneously.  All the available results on the X-channel exploit the channel structure (e.g., the phase information) to improve performance (e.g., by interference alignment \cite{maddahali-tit-2008} or filter design). Therefore, knowing how to exploit the X-channel scenario in the setup under consideration (which is in part characterized by the power
domain operation) in this paper, appears to a relevant extension ($\star$).

\textbf{Channel gain quantization operation $\mathcal{Q}_{i}^{\mathrm{II}}$}: The first step in Phase $\mathrm{II}$ is for each of the transmitters to quantize the $K-$dimensional vector $\widetilde{\ul{g}_i}$.  For simplicity, we assume that each element of the real $K-$dimensional vector $\widetilde{\ul{g}_i}$ is quantized by a scalar quantizer into a label of $N_{\mathrm{II}}$ bits. This assumption is motivated by low complexity but also by the fact that the components of $\widetilde{\ul{g}_i}$ are independent in the most relevant scenarios of interest. For instance, if local CSI is very well estimated, the estimated channel gains are close to the actual channel gains, which are typically independent in practice. Now, in the general case of arbitrary estimation noise level, the components of $\widetilde{\ul{g}_i}$ will be independent when the training matrix $\mathbf{P}_{\mathrm{I}}$ is chosen to be diagonal, which is a case of high interest and is motivated further in Sec. V. Under the channel gain (quasi-) independency, vector quantization would bring (almost) no performance improvement. The \textit{scalar quantizer} used by Transmitter $i$ to quantize $\widetilde{g}_{ji}$ is denoted by $\mathcal{Q}_{ji}^{\mathrm{II}}$. Finding the best quantizer in terms of ultimate network utility (e.g., in terms of sum-rate or sum-energy-efficiency) does not appear to be straightforward ($\star$). We present two possible quantization schemes in this section.

A possible, but generally sub-optimal approach, is to determine a quantizer which \textit{minimizes distortion}. The advantage of such approach is that it is possible to express the quantizer and it leads to a scheme which is independent of the network utility; this may be an advantage when the utility is unknown or changing. A possible choice for the quantizer $\mathcal{Q}_{i}^{\mathrm{II}}$ is to use the conventional version of the Lloyd-Max algorithm (LMA) \cite{lloyd-tit-1982}. However, this algorithm assumes perfect knowledge of the information source to be quantized (here this would amount to assuming the channel estimate to be noiseless) and no noise between the quantizer and the dequantizer (here this would amount to assuming perfect knowledge of the RS power). The authors of \cite{Brice07} proposed a generalized version of the Lloyd-Max algorithm for which noise can be present both at the source and the transmission but the various noise sources are assumed to verify standard assumptions (such as independence of the noise and the source), which are not verified in the setting under investigation; in particular, the noise in Phase $\mathrm{I}$ is the estimation noise, which is correlated  with the transmitted signal. Deriving the corresponding generalized Lloyd-Max algorithm can be checked to be a challenging task, which is left as an extension of the technical solutions proposed here ($\star$). Rather, we will provide here a special case of the generalized Lloyd-Max algorithm, which is very practical in terms of computational complexity and required knowledge. 

The version of the Lloyd-Max algorithm we propose will be referred to as \textit{ALMA (advanced Lloyd-Max algorithm)}. ALMA corresponds to the special case (of the most generalized version mentioned previously) in which the algorithm assumes noise on the transmission but not at the source (although the source can be effectively noisy). This setting is very well suited to scenarios where the estimation noise due to Phase I is negligible or when local CSI can be acquired reliably by some other mechanism. In the numerical part, we can observe the improvements of the proposed ALMA with respect to the conventional LMA. Just like the conventional LMA, ALMA aims at minimizing distortion by iteratively determining the best set of representatives and the best set of cells (which are intervals here) when one of the two is fixed. The calculations for obtaining the optimal representatives and partitions are given in Appendix C for both the special case of no source noise as well as for the general case. Solving the general case can be seen from Appendix C to be computationally challenging.

To comment on the proposed algorithm which is given by the pseudo-code of Algorithm 1, a few notations are in order. We denote by $q \in \{1,...,Q\}$ the \textit{iteration index} (where $Q$ is the upper bound on the number of iterations) and define $R =2^{N_{\mathrm{II}}}$. For each channel gain estimate $\widetilde{g}_{ji}$ to be quantized, we denote by $\underline{v}_{ji} =\left\{v_{ji,1}^{(q)},...,v_{ji,R}^{(q)}\right\}$ the set of \textit{representatives} and by $\left\{u_{ji,1}^{(q)},...,u_{ji,R+1}^{(q)}\right\}$ (with $u_{ji,1}^{(q)}=0$ and $u_{ji,R+1}^{(q)}=\infty$) the set of \textit{interval bounds} which defines how the set $\widetilde{g}_{ji}$ lies in (namely $[0,+\infty)$) is partitioned. At each iteration, the choice of the set of representatives or intervals aims at minimizing the end-to-end distortion $\mathbb{E}|\widetilde{g}_{ji}-g_{ji}|^2$. This minimization operation requires some statistical knowledge. Indeed, the probability that the dequantizer decodes the representative $v_{ji,r}^{(q)}$ given that $v_{ji,n}^{(q)}$ has been transmitted needs to be known; this probability is denoted by $\pi_{ji}(r|n)$ and constitutes one of the inputs of Algorithm 1. The second input of Algorithm 1 is the p.d.f. of $g_{ji}$ which is denoted by $\phi_{ji}$. The third input is given by the initial choice for the quantization intervals that is, the set $\left\{u_{ji,1}^{(0)},...,u_{ji,R+1}^{(0)}\right\}$. Convergence of ALMA to a global minimum point is not guaranteed and finding sufficient condition for global convergence is known to be non-trivial. However, local convergence is guaranteed; an elegant and general argument for this can be found in \cite{larrousse-icassp-2014}. Conducting a theoretical analysis in which global convergence is tackled would constitute a significant development of the present analysis ($\star$), which is here based on typical and realistic simulation scenarios.

\begin{algorithm}[h]\label{algo}
{\bf{Inputs:}} {$\pi_{ji}$}, $\phi_{ji}\left(g_{ji}\right)$, $\left\{u_{ji,1}^{(0)},...,u_{ji,R+1}^{(0)}\right\}$\\
{\bf{Outputs:}} $\left\{u_{ji,1}^{\star},...,u_{ji,R+1}^{\star}\right\}$, $\left\{v_{ji,1}^{\star},...,v_{ji,R+1}^{\star}\right\}$\\ 
{\bf{Initialization:}} Set $q=0$. Initialize the quantization intervals according to $\left\{u_{ji,1}^{(0)},...,u_{ji,R+1}^{(0)}\right\}$. Set $u_{ji,r}^{(-1)}=0$ for all $r \in \{1,...,R\}$.\\
\While{$ \displaystyle{\max_r}   ||u_{ji,r}^{(q)}-u_{ji,r}^{(q-1)}||>\delta$ and $q<Q$}{
Update the iteration index: $q \gets q+1$.\\
For all $r \in \{1,2,..,R\}$ set
\begin{equation}
v_{ji,r}^{(q)} \gets
\frac{{\displaystyle\sum_{n=1}^R}  \pi_{ji} \left({r}|n\right)\displaystyle\int_{  u_{ji,n}^{(q-1)}   }^{ u_{ji,n+1}^{(q-1)} }
{g}_{ji}
\phi_{ji}({g}_{ji})
\mathrm{d}g_{ji} 
}{{\displaystyle\sum_{n=1}^R} \pi_{ji}  \left({r}|n\right)\displaystyle
\int_{  u_{ji,n}^{(q-1)}   }^{ u_{ji,n+1}^{(q-1)} }
\phi_{ji}({g}_{ji})
\mathrm{d}g_{ji} 
}.
\label{eq:alm_r1}
\end{equation}
For all $r \in \{2,3,..,R\}$ set
\begin{equation}
u_{ji,r}^{(q)} \gets \frac{{\displaystyle\sum_{n=1}^R}\left[ \pi_{ji}  \left(n|{r}\right)-  \pi_{ji} \left(n|{r}-1\right) \right] \left(v_{ji,n}^{(q)}\right)^2  }{{\displaystyle2 \sum_{n=1}^R}\left[ \pi_{ji} \left(n|{r}\right)- \pi_{ji}\left(n|{r}-1\right)\right] v_{ji,n}^{(q)}    }.
\label{eq:alm_t1}
\end{equation}
}
$\forall r \in \{2,...,R\}, \> u_{ji,r}^\star=u_{ji,r}^{(q)}$, $u_{ji,1}^\star=0$ and $u_{ji,R+1}^\star=\infty$ \\
$\forall r \in \{1,...,R\}, \> v_{ji,r}^\star= v_{ji,r}^{(q)}$
\caption{\small Advanced Lloyd-Max algorithm (ALMA)}
\end{algorithm}

At this point two comments are in order. First, through (\ref{eq:alm_r1})-(\ref{eq:alm_t1}), it is seen that ALMA relies on some statistical knowledge which might not always be available in practice. This is especially the case for $\pi_{ji}$ and $\gamma_{ji}$ since the knowledge of channel distribution information (CDI, i.e., $\phi_{ji}$) is typically easier to be obtained. The CDI may be obtained by storing the estimates obtained during past transmissions and forming empirical means (possibly with a sliding window). If the CDI is time-varying, a procedure indicating to the terminals when to update the statistics might be required. Second, if we regard Phase II as a classical communication process, then the amount of information sent by the source is maximized when the source signal is uniformly distributed. It turns out minimizing the (end-to-end) distortion over Phase II does not involve this. Motivated by these two observations we provide here a second quantization scheme, which is simple but will be seen to perform quite well in the numerical part. We will refer to this quantization scheme as \textit{maximum entropy quantizer (MEQ).} For MEQ, the quantization interval bounds are fixed once and for all according to:
\begin{equation}\label{eq:MEQ1}
\forall r \in \{1,...,R\}, \forall (j,i) \in \{1,...,K\}^2, 
\ \int_{u_{ji,r}}^{  u_{ji,r+1} } \phi_{ji}(g_{ji}) \mathrm{d} g_{ji} 
= \frac{1}{R}.
\end{equation}
The representative of the interval $[{u_{ji,r}}, {  u_{ji,r+1} } ]$ is denoted by $v_{ji,r}$ and is chosen to be its centroid:
\begin{equation}\label{eq:MEQ2}
v_{ji,r} = \frac{\displaystyle{ \int_{ u_{ji,r}   }^{ u_{ji,r+1} }  g_{ji} \phi_{ji}(g_{ji} ) \mathrm{d}   g_{ji}  } }{ \displaystyle{  \int_{ u_{ji,r}   }^{ u_{ji,r+1} }   \phi_{ji}(g_{ji} ) \mathrm{d}   g_{ji} }   }.
\end{equation}
We see that each representative has the same probability to occur, which maximizes the entropy of the quantizer output, hence the proposed name.  To implement MEQ, only the knowledge of $\phi_{ji}$ is required. Additionally, the complexity involved is very low.

%

\textbf{Power modulation:} To inform the other transmitters about its knowledge of local CSI, Transmitter $i$ maps the $K$ labels of $N_{\mathrm{II}}$ bits produced by the quantizer $Q_i^{\mathrm{II}}$ to a sequence of power levels $(p_i(T_{\mathrm{I}}+1),p_i(T_{\mathrm{I}}+2),\dots, p_i(T_{\mathrm{I}}+T_{\mathrm{II}}) )$. Any one-to-one mapping might be used a priori. Although the new problem of finding the best mapping for a given network utility arises here and constitutes a relevant direction to explore ($\star$), we will not only develop this here. Rather, our main objective here is to introduce this problem and illustrate it clearly for a special case which is treated in the numerical part. To this end, assume Phase II comprises $T_{\mathrm{II}}=2$ time-slots, $K=2$ users, and that the users only exploit $L=2$ power levels during Phase II say $\mathcal{P}= \{P_{\min}, P_{\max} \}$. Further assume $1-$bit quantizers, which means that the quantizers $\mathcal{Q}_{ji}^{\mathrm{II}}$ produce binary labels. For simplicity, we assume the same quantizer $\mathcal{Q}$ is used for all the four channel gains $g_{11}$, $g_{12}$, $g_{21}$, and $g_{22}$: if $g_{ij} \in [0,\mu]$ then the quantizer output is denoted by $g_{\min}$; if $g_{ij} \in (\mu, +\infty)$ then the quantizer output is denoted by $g_{\max}$.  Therefore a simple mapping scheme for Transmitter $1$ (whose objective is to inform Transmitter $2$ about $(g_{11}, g_{21})$) is to choose $p_1(T_{\mathrm{I}}+1) = P_{\min}$ if $\mathcal{Q}(g_{11})  = g_{\min} $ and $p_1(T_{\mathrm{I}}+1)  = P_{\max}$ otherwise; and $p_1(T_{\mathrm{I}}+2)  = P_{\min}$ if $\mathcal{Q}(g_{21})  = g_{\min} $ and $p_1(T_{\mathrm{I}}+2)  = P_{\max}$ otherwise. Therefore, depending on the p.d.f. of $g_{ij}$, the value of $\mu$, the performance criterion under consideration, a proper mapping can chosen. For example, to minimize the energy consumed at the transmitter, using the minimum transmit power level $P_{\min}$ as much as possible is preferable; thus if $\Pr({\mathcal{Q}(g_{11})  = g_{\min}})\geq \Pr({\mathcal{Q}(g_{11})  = g_{\max}})$, the power level $P_{\min}$ will be associated with the minimum quantized channel gain that is $\mathcal{Q}(g_{11})  = g_{\min}$. 

\textbf{Power level decoding: }For every time-slot $t \in \{T_{\mathrm{I}}+1, ..., T_{\mathrm{I}}+T_{\mathrm{II}}\}$ the power levels are estimated by Transmitter $i$ as follows
\begin{equation}\label{eq:lattice-decoding}
\ul{\widetilde{p}}_{-i}(t) \in \displaystyle{\arg \min_{\ul{p}_{-i} \in \mathcal{P}^{K-1}}   \left| \sum_{j \neq i}p_j \widetilde{g}_{ji} - (\widetilde{\omega}_i(t) - p_i(t)  \widetilde{g}_{ii}  - \sigma^2 )      \right|},
\end{equation} 
where $\ul{p}_{-i}=\left(p_1,..,p_{i-1},
p_{i+1},..,p_{K}\right)$. As for every $j$, $\widetilde{g}_{ji}$ is known at Transmitter $i$, the above minimization operation can be performed. It is seen that exhaustive search can be performed as long as the number of tests, which is $L^{K-1} $, is reasonable. For this purpose, one possible approach is to impose the number of power levels which are exploited over Phase $\mathrm{II}$  to be small. In this respect, using binary power over Phase $\mathrm{II}$ is not only relevant regarding complexity issues but also in terms of robustness against the various possible sources of noise. As for \textbf{the number of interfering users} using the same channel (meaning operating on the same frequency band, at the same period of time, in the same geographical area), it \textbf{will typically be small and does not exceed $3$ or $4$ in real wireless systems}. More generally, this shows that the proposed technique can accommodate more than $4$ users in total; For example, if we have $12$ bands, having $48=12\times 4$ users would be manageable by applying the proposed technique for each band. As our numerical results indicate, using (\ref{eq:lattice-decoding}) as a decoding rule to find the power levels of the other transmitters generally works very well for $K=2$. When the number of users is higher, each transmitter  needs to estimate $K-1$ power levels  with only one observation equation, which typically induces a non-negligible degradation in terms of symbol error rate. In this situation, Phase II can be performed by scheduling the activity of all the users, such that only $2$ users are active at any given time-slot in Phase II. Once all pairs of users have exchanged information on their channel states, Phase II is concluded. 

\textit{Remark 1.} Note that the case where only one user is active at a time is a special case of the decoding scheme assumed here. The advantage of our more general decoding scheme is that it can be used when strict SINR feedback is used \cite{varma-eusipco-2015} instead of RSSI; indeed when only one user is active at a time, the SINR becomes an instantaneous SNR and cannot convey any coordination information. Concerning the setting with RSSI feedback, the drawback of our assumption is that in the presence of noise on the RS power feedback, the performance of Phase II may be limited when the cross channel gains are very small. If this turned out to be a crucial problem, allowing only one user to be active at a time is preferable.


\textit{Remark 2 (required number of time-slots).} The proposed technique typically requires $K+K=2K$ time-slots for the whole exploration phase (Phases I and II). It therefore roughly require the same amount of resources as IWFA, which indeed needs about $2K$ or $3K$ SINR samples to converge to Nash equilibrium. While channel acquisition may seem to take some time, please note that regular communication is uninterrupted and occurs in parallel. {As already mentioned, the context in which the proposed technique and IWFA are the most suited is a context where the channel is constant over a large number of time-slots, which means that the influence of the exploration phase on the average performance is typically negligible. Nonetheless, some simulations will be provided to assess the optimality loss induced by using power levels to convey information.}

\textit{Remark 3 (extension to the multi-band scenario).}  As explained in the beginning of this paper, Phases I and II are described for the single-band case, mainly for clarity reasons. Here, we briefly explain how to adapt the algorithm when there are multiple bands. In Phase I, the only difference exists in choosing the training matrix. With say $S$ bands to transmit, for each band $s \in \{1,...,S\}$, the training matrix $\mathbf{P}_{\mathrm{I}}^{s}$ has to fulfill the constraint $\sum_{s=1}^Sp_i^s(t) \leq P_{\max}$ where $p_i^s$ is the power Transmitter $i$ allocates to band $s$. In Phase II, each band performs in parallel like the single-band case. Since there are power constraints for each transmitter, the modulated power should satisfy  $\sum_{s=1}^Sp_i^s(t) \leq P_{\max}$. 

\textit{Remark 4 (extension to the multi-antenna case).} To perform operation such as beam-forming, the phase information is generally required. The proposed local CSI estimation techniques (namely, for Phase I) do not allow the phase information or the direction information to be recovered;  Therefore, another type of feedback should be considered for this. However, if another estimation scheme is available or used for local CSI acquisition and that scheme provides the information phase, then the techniques proposed for local CSI exchange (namely, for Phase II) can be extended. An extension which is more in line with the spirit of the manuscript is given by a MIMO interference channel for which each transmitter knows the interference-plus-noise covariance matrix and its own channel. This is the setup assumed by Scutari et al in their work on MIMO iterative water-filling \cite{scutari-tsp-2009}.

{\textit{Remark 5 (type of information exchanged).} One of the strengths of the proposed exchange procedure is that \textbf{any kind} of information can be exchanged. However, since SINR or RSSI is used as the communication channel, this has to be at a low-rate which is given by the frequency at which the power control levels are updated and the feedback samples sent.}
\section{Numerical Analysis}
\label{sec:numerical-analysis}

In this section, as a first step (Sec.~\ref{sec:global-perf-analysis}), we start with providing simulations which result from the combined effects of Phases I and II. To make a coherent comparison with IWFA, the network utility will be evaluated without taking into account a cost possibly associated with the exploration or training phases (i.e., Phases I and II for the proposed scheme or the convergence time for IWFA). The results are provided for a reasonable scenario of small cell networks which is similar to those already studied in other works (see e.g., \cite{samarakoon-jsac-sub-2016} for a recent work). As a second step (Sec.~\ref{sec:phase-I} and \ref{sec:phase-II}), we study special cases to better understand the influence of each estimation phase and the different parameters which impact the system performance.


\subsection{Global performance analysis: a simple small cell network scenario}
\label{sec:global-perf-analysis}

\begin{minipage}{\linewidth}
\centering
\begin{minipage}{0.4 \linewidth}
\begin{table}[H]
\begin{tabular}{|p{1.8cm}|p{4cm}|p{1.2cm}|}
\hline
\textbf{Acronym} & \textbf{Meaning} & \textbf{Definition} \\
\hline
ALMA & advanced Lloyd-Max algorithm & (\ref{eq:alm_r1}),(\ref{eq:alm_t1}) \\
\hline
CSI & channel state information & \\
\hline
EE & energy-efficiency & \\
\hline
ENSR & estimation signal-to-noise ratio & (\ref{eq:ESNR})\\
\hline
ISD & inter site distance &\\
\hline
IWFA & iterative water-filling algorithm & \cite{yu-jsac-2002} \\
\hline
LMA & conventional Lloyd-Max algorithm & \cite{lloyd-tit-1982} \\
\hline
LSPD & least squares estimator  &  (\ref{eq:LS}) \\
& in power domain & \\
\hline
MEQ & maximum entropy quantizer &  (\ref{eq:MEQ1}),(\ref{eq:MEQ2}) \\
\hline
MMSEPD & minimum mean square error  & (\ref{eq:MMSEPD}) \\
& estimator in power domain & \\
\hline
MS & mobile station & \\
\hline
SBS & small base station & \\
\hline
Team BRD & team best response dynamics & (\ref{eq:sum-rate-def}),(\ref{eq:sum-EE-def})\\
\hline
\end{tabular}
\vspace{-.01cm}
\caption{\small Acronyms used in Sec.~\ref{sec:numerical-analysis}}
\end{table}
\end{minipage}
\hspace{0.1\linewidth}
\begin{minipage}{0.45 \linewidth}
\begin{figure}[H]\label{fig:small-cells-network}
\begin{center}
\large
\begin{tikzpicture}[
    xscale = .52,yscale=.52, transform shape, thick,
    grow = down,  
    level 1/.style = {sibling distance=3cm},
    level 2/.style = {sibling distance=4cm}, 
    level 3/.style = {sibling distance=2cm}, 
    level distance = 1.25cm
  ]
 \draw [dotted] (10,0) -- (10,15);
\draw [dotted] (0,5) -- (15,5);
\draw [dotted] (0,10) -- (15,10); 
\draw (0,0) -- (15,0) -- (15,15) -- (0,15) -- (0,0);
\draw [dotted] (5,0) -- (5,15);

\node (b1) at (2.5,2.5)[shape = circle,draw]{SBS$_1$};
\node (b2) at (7.5,2.5)[shape = circle,draw]{SBS$_2$};
\node (b3) at (12.5,2.5)[shape = circle,draw]{SBS$_3$};
\node (b4) at (2.5,7.5)[shape = circle,draw]{SBS$_4$};
\node (b5) at (7.5,7.5)[shape = circle,draw]{SBS$_5$};
\node (b6) at (12.5,7.5)[shape = circle,draw]{SBS$_6$};
\node (b7) at (2.5,12.5)[shape = circle,draw]{SBS$_7$};
\node (b8) at (7.5,12.5)[shape = circle,draw]{SBS$_8$};
\node (b9) at (12.5,12.5)[shape = circle,draw]{SBS$_9$};
      

\node (m1) at (3.91101, 3.6508) [shape = rectangle,draw,fill=gray!10]{MS$_1$};
\node (m2) at (7.9828 ,1.01402) [shape = rectangle,draw,fill=gray!10]{MS$_2$};
\node (m3) at (10.1921 ,0.7040) [shape = rectangle,draw,fill=gray!10]{MS$_3$};
\node(m4) at (2.2934 ,5.8828) [shape = rectangle,draw,fill=gray!10]{MS$_4$};
\node(m5) at (6.6272 ,5.9110) [shape = rectangle,draw,fill=gray!10]{MS$_5$};
\node(m6) at (14.1351 ,9.2577) [shape = rectangle,draw,fill=gray!10]{MS$_6$};
\node(m7) at (1.8307 ,10.5872) [shape = rectangle,draw,fill=gray!10]{MS$_7$};
\node(m8) at (7.1415 ,14.6032) [shape = rectangle,draw,fill=gray!10]{MS$_8$};
\node(m9) at (12.5028 ,10.6665) [shape = rectangle,draw,fill=gray!10]{MS$_9$};

  \draw [->] (b9) to node[rectangle,left]{} (m9);
  \draw [->] (b1) to node[rectangle,left]{} (m1);
  \draw [->] (b2) to node[rectangle,left]{} (m2);
  \draw [->] (b3) to node[rectangle,left]{} (m3);
  \draw [->] (b4) to node[rectangle,left]{} (m4);
  \draw [->] (b5) to node[rectangle,left]{} (m5);
  \draw [->] (b6) to node[rectangle,left]{} (m6);
  \draw [->] (b7) to node[rectangle,left]{} (m7);
  \draw [->] (b8) to node[rectangle,left]{} (m8);

\draw [<->,color=blue,double] (10.2,14.8) to node[rectangle,fill=gray!1,below,yshift=-.3cm,xshift=.7cm]{Cell size: $d\times d$} (14.5,14.8);

\draw [<->,color=blue,double] (14.8,10.2) to node[rectangle]{} (14.8,14.5);

\draw [<->,double,color=blue] (2.5,11.5) to node[rectangle,fill=gray!1,below,yshift=-.3cm]{Inter-site distance: $d$} (7.5,11.5);
\draw [<->,double,color=blue] (8.5,12.5) to node[rectangle,fill=gray!1,right,xshift=.1cm]{$d$} (8.5,7.5);

%
  \draw [->,dashed,color=red] (b7) to node[rectangle,above,yshift=.3cm,xshift=-.3cm,fill=gray!1]{Interference} (m8);

%
%
%
%

\end{tikzpicture}
\caption{\small Small cell network configuration assumed in Sec.~\ref{sec:global-perf-analysis}}
\end{center}

\end{figure}
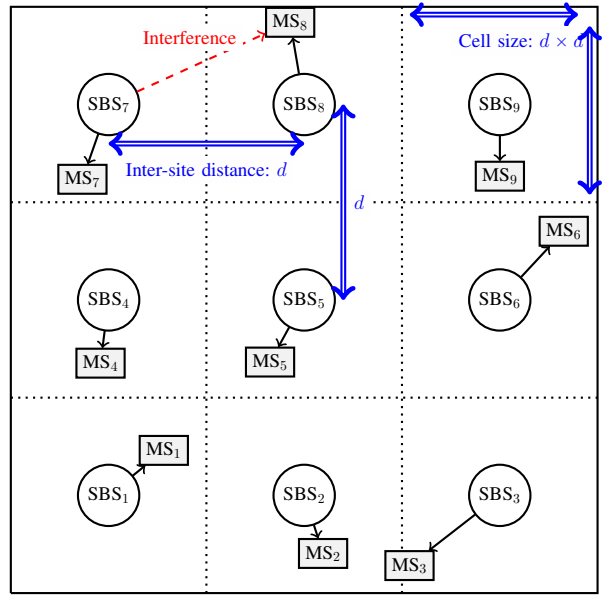
\end{minipage}
   \end{minipage}

\vspace{0.5cm}
As shown in Fig.~3, the considered scenario assumes $K=9$ small cell base stations with maximal transmit power $P_{\max} = 30$ dBm. One or two bands are assumed, depending on the scenario considered. One user per cell is assumed, which corresponds to a possible scenario in practice (see e.g., \cite{samarakoon-jsac-sub-2016}\cite{moustakas-isit-2013}\cite{moustakas-tit-sub-2013}). We also use this setup to be able to compare the proposed scheme with IWFA whose performance is generally assessed for the most conventional form of the interference channel, namely, $K$ transmitter-receiver pairs. The normalized receive noise power is $\sigma^2 = 0$ dBm. This corresponds to $\mathrm{SNR(dB)}=30$ where the \textit{signal-to-noise ratio} is defined by
\begin{equation}
\mathrm{SNR(dB)}=10 \log_{10}\left(  \frac{P_{\max}}{\sigma^2} \right). 
\end{equation}
Here and in all the simulation section, we set the SNR to $30 \ \mathrm{dB}$ by default. RS power measurements are quantized uniformly in a dB scale with $N=8$ bits and the quantizer input dynamics or range in dB is $[\mathrm{SNR(dB)}-20, \mathrm{SNR(dB)}+ 10]$. The DMC $\Gamma$ is constructed with error probability  $\epsilon$ to the two nearest neighbors, i.e., for the symbols $\mathrm{w}_1<\mathrm{w}_2<\dots<\mathrm{w}_M$ (with $M=2^N$), $\Gamma\left(\mathrm{w}_{i}|\mathrm{w}_{j}\right)=\epsilon$ if $|{i-j}| = 1$ and $\Gamma\left(\mathrm{w}_{i}|\mathrm{w}_{j}\right)=0$ if $|{i-j}| > 1$. In this section $\epsilon = 1\%$; the quantity $\epsilon$ will be referred to as the \textit{feedback channel symbol error rate} (FCSER). For all $(i,j)$ and $s$ ($s$ always being the band index) the channel gain $g_{ij}^s$ on band $s$ is assumed to be exponentially distributed namely, its p.d.f. writes as $\phi_{ij}^s(g_{ij}^s) = \frac{1}{\mathbb{E}[g_{ij}^s ]}  \exp\left(-\frac{g_{ij}^s}{\mathbb{E}[g_{ij}^s]} \right) $; this corresponds to the well-known Rayleigh fading assumption. Here, $\mathbb{E}(g_{ij}^s)$ models the path loss effects for the link $ij$ and depends of the distance as follows: $\mathbb{E}(g_{ij}^s) = \left(\frac{d_0}{d_{ij}}\right)^2$ where $d_{ij}$ is the distance between Transmitter $i$ and Receiver $j$ and $d_0= 5$ m is a normalization factor. The normalized coordinates of the mobile stations $\mathrm{MS}_1, ..., \mathrm{MS}_9$ are respectively given by: $(3.8, 3.2)$, $(7.9, 1.4)$, $(10.2, 0.7)$, $(2.3, 5.9)$, $(6.6, 5.9)$, $(14.1, 9.3)$, $(1.8, 10.6)$, $(7.1, 14.6)$, $(12.5, 10.7)$; the real coordinates are obtained by multiplying the former by the ratio $\frac{\mathrm{ISD}}{d_0}$, ISD being the inter site distance. In this section, the system performance is assessed in terms of sum-rate, the \textit{sum-rate} being given by:
\begin{equation}\label{eq:sum-rate-def}
u^{\text{sum-rate}}(\ul{p}_1,...,\ul{p}_K; \Gm) = \sum_{i=1}^K \sum_{s=1}^S\log( 1+ \mathrm{SINR}_i^s(  \ul{p}_1,...,\ul{p}_K ;\Gm)).
\end{equation}
where $\ul{p}_i = (p_{i}^1,..., p_i^S)$ represents the power allocation vector of Transmitter $i$, $\mathrm{SINR}_i^s$ is the SINR at Receiver $i$ in band $s$ and expresses as $\mathrm{SINR}_{i}^s = \frac{ g_{ii}^s  p_{i}^s}{\displaystyle{ \sigma^2+\sum_{j \neq i} g_{ji}^s  p_{j}^s }}$.

Fig.~\ref{fig:global-sum-rate-ISD-S2}, represents the average sum-rate against the ISD. The sum-rate is averaged over $10^4$ realizations of the channel gain matrix $\Gm$ and the \textit{inter site distance} is the distance between two neighboring small base stations. Three curves are represented. The top curve corresponds to the performance of the sequential best-response dynamics applied to the sum-rate (referred to as \textit{Team BRD}) in the presence of perfect global CSI. The curve in the middle corresponds to Team BRD which uses the estimate obtained by using the most simple association proposed in the paper namely, LSPD for Phase I and the $2-$bit MEQ for Phase II. The LSPD estimator uses $K$ time-slots and the $K-$dimensional identity matrix $\mathbf{P}_{\mathrm{I}} = P_{\max} \mathbf{I}_K$ for the training matrix. The $2-$bit MEQ uses binary power control ($L=2$) and $2K$ time-slots to send the information, i.e.,  $\ul{g}_i$); this corresponds to the typical number of time-slots IWFA needs to converge. At last, the bottom curve corresponds to IWFA using local CSI estimates provided by Phase I. It is seen that about $50\%$ of the gap between IWFA and Team BRD with perfect CSI can be bridged by using the proposed estimation procedure. When the interference level is higher, the gap becomes larger. Fig.~\ref{fig:global-sum-rate-ISD-S1} depicts exactly the same scenario as Fig.~\ref{fig:global-sum-rate-ISD-S2} except that only one band is available to the small cells i.e., $S=1$. Here the gap can be bridged at about $65\%$ when using Team BRD with the proposed estimation procedure.

In this section, some choices have been made:  a diagonal training matrix and the LSPD estimator has been chosen for Phase I and the MEQ has been chosen for Phase II. The purpose of the next sections is to explain these choices, and to better identify the strengths and weaknesses of the proposed estimation procedures.  
\vspace{-0.15cm}
\begin{figure}[h]
	\centering
	\begin{subfigure}[t]{.48 \linewidth}
		\centering
		\includegraphics[width=1.0 \linewidth]{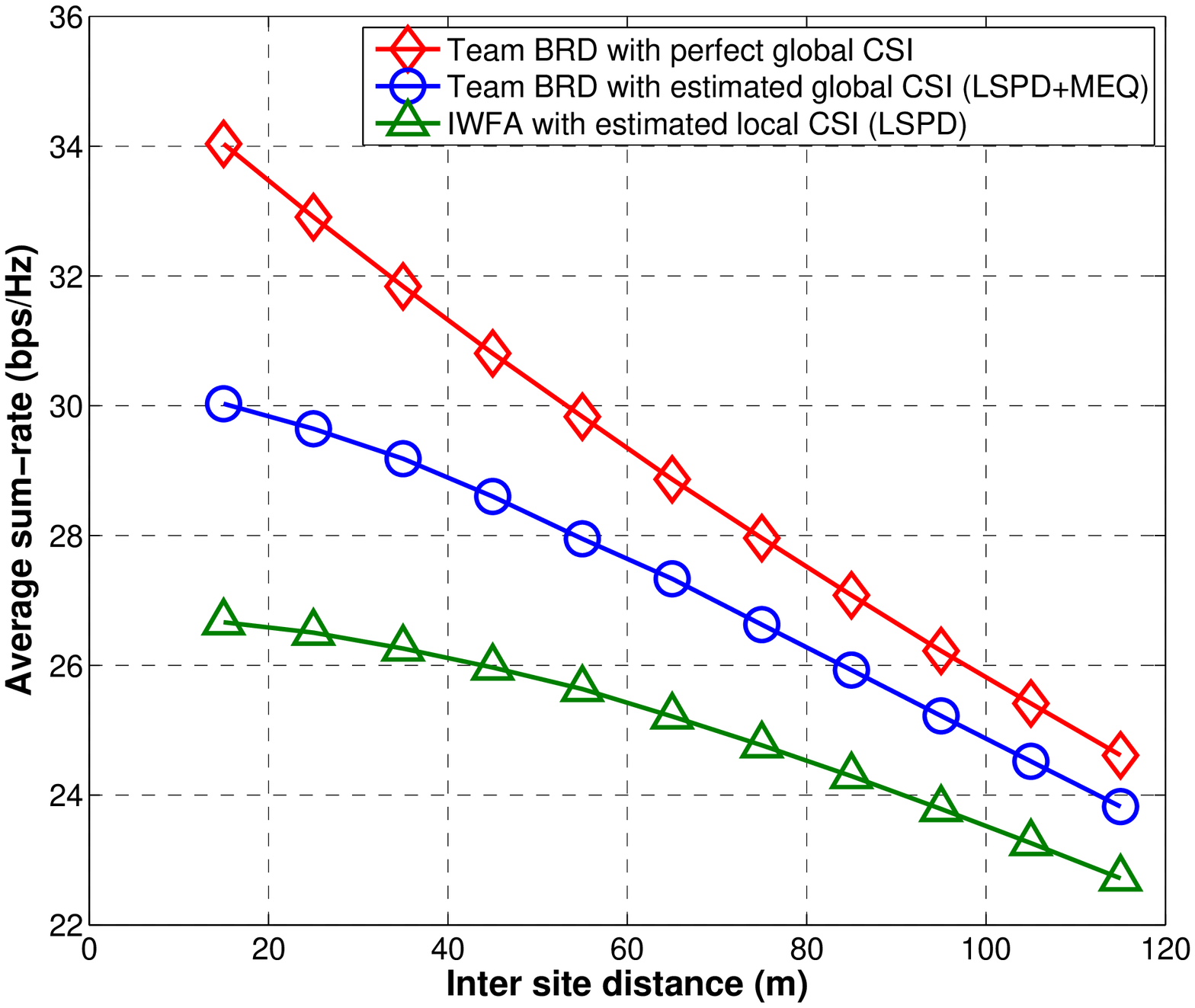}
		\caption{$S=2$}\label{fig:global-sum-rate-ISD-S2}
	\end{subfigure}
\hspace{0.01\linewidth}
	\begin{subfigure}[t]{.48 \linewidth}
		\centering
		\includegraphics[width=1.0 \linewidth]{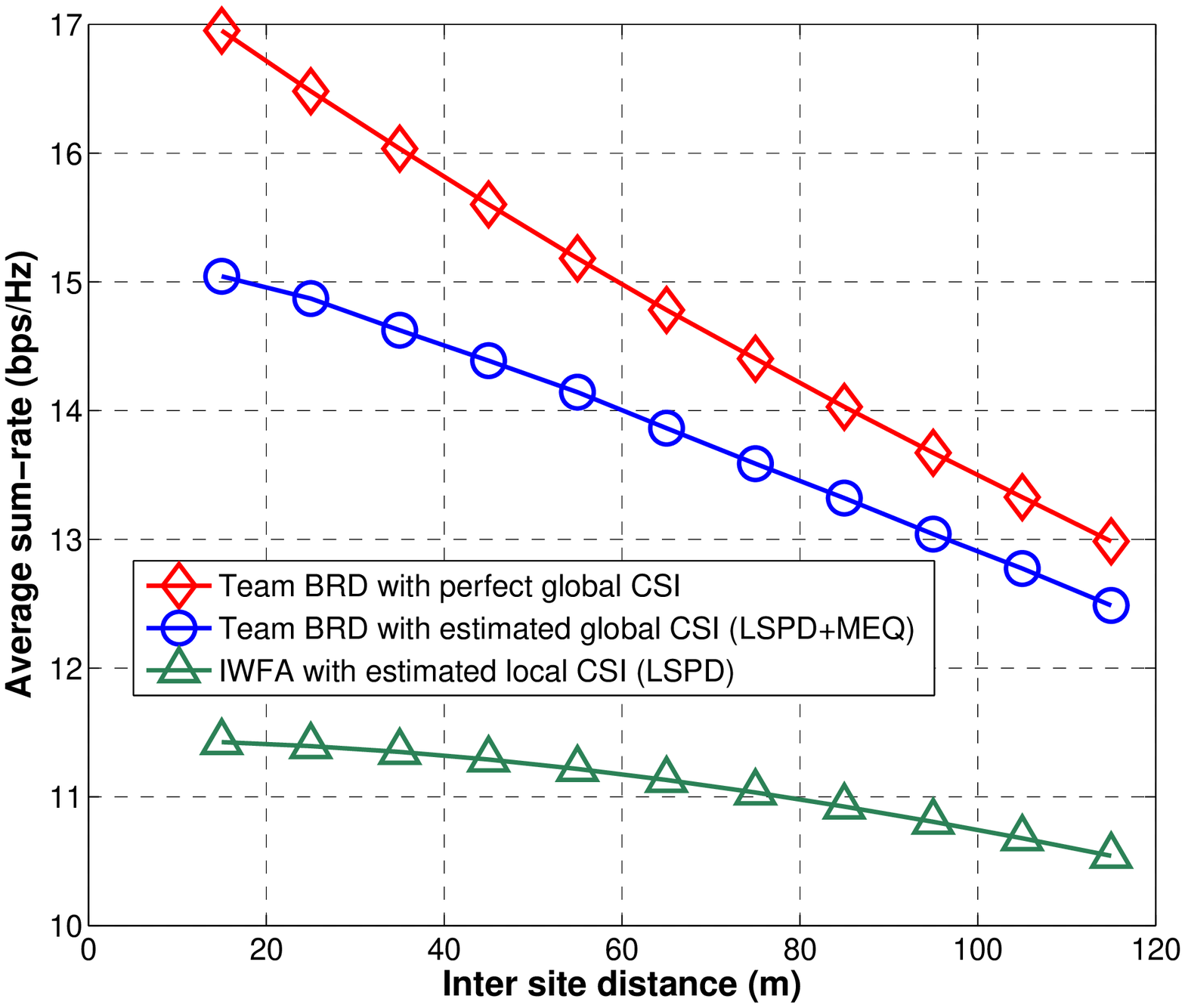}
		\caption{$S=1$}\label{fig:global-sum-rate-ISD-S1}
	\end{subfigure}
	\caption{\small The above curves are obtained in the scenario of Fig.~4 in which $K=9$ transmitter-receiver pairs, $\mathrm{SNR(dB)=30}$, the FCSER is given by $\epsilon=0.01$, $N=8$ quantization bits for the RSSI, and $L=2$ power levels. Using the most simple estimation schemes proposed in this paper namely LSPD and MEQ can bridge the gap between the IWFA and the team BRD with perfect CSI, about $50\%$ when $S=2$ and about $65\%$ when $S=1$.}
\end{figure}
\vspace{-0.15cm}
\vspace{-0.15cm}
\subsection{Comparison of estimation techniques for Phase I}
\label{sec:phase-I}

\begin{figure}[h]
\raggedleft
	\begin{subfigure}[t]{.48 \linewidth}
		\centering
		\includegraphics[width=1.0 \linewidth]{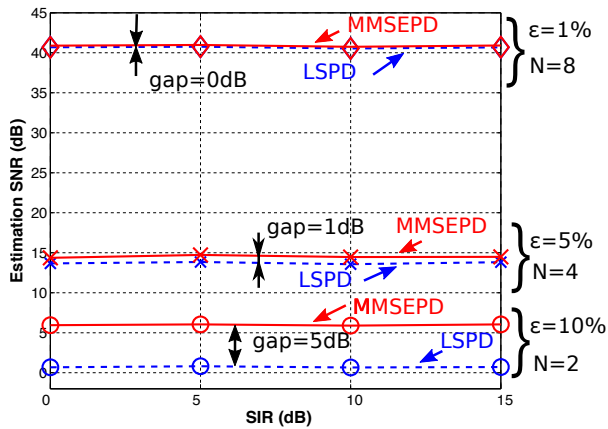}
		\caption{Using MMSEPD instead of LSPD in Phase I becomes useful in terms of ESNR when the RSSI quality becomes too rough (bottom curves).}\label{fig:phase-I-ESNR-SIR}
	\end{subfigure}
\hspace{0.01\linewidth}
	\begin{subfigure}[t]{.48 \linewidth}
\raggedright
		\includegraphics[width=1.0 \linewidth]{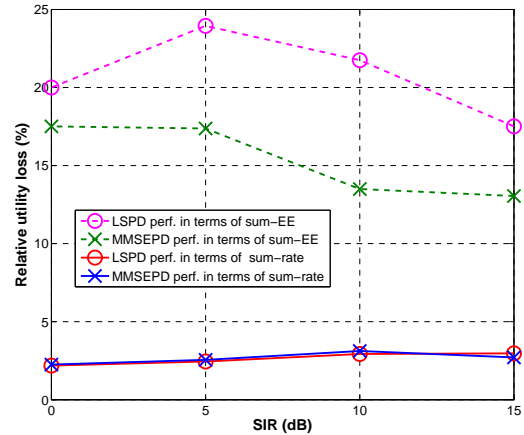}
		\caption{The figure provides the relative utility loss under quite severe conditions in terms of RSSI quality ($N=2$, $\epsilon=10\%$).}\label{fig:phase-I-Delta-u-SIR}
	\end{subfigure}
	\caption{\small Comparing MMSEPD and LSPD assuming perfect Phase II.}
\end{figure}

In Phase I, there are two main issues to be addressed: the choice of the estimator and the choice of the training matrix $\mathbf{P}_{\mathrm{I}}$. To compare the LSPD and MMSEPD estimators, we first consider the estimation SNR (ESNR) as the performance criterion to compare them. The \textit{estimation SNR} of Transmitter $i$ is defined here for the case $S=1$ and is given by:
\begin{equation}\label{eq:ESNR}
\mathrm{ESNR}_i= \frac{\mathbb{E}[\|\mathbf{G}\|^2]}{\mathbb{E}[\|\mathbf{G}-\widetilde{\mathbf{G}}_i\|^2]} .
\end{equation}
where $\|.\|^2$ stands for the Frobenius norm and $\widetilde{\mathbf{G}}_i$ is the global channel estimate which is available to Transmitter $i$ after Phases I and II. In this section, we always assume a perfect exchange in Phase II to conduct the different comparisons. This choice is made to isolate the impact of Phase I estimation techniques on the estimation SNR and the utility functions which are considered for the exploitation phase. After extensive simulations, we have observed that the gain in terms of ESIR by using the best training matrix (computed by an exhaustive search over all the matrix elements) is found to be either negligible or quite small when compared to the best diagonal training matrix (computed by an exhaustive search over the diagonal elements); see e.g., Fig.~6a for such a simulation. Therefore, for the rest of this paper, we will restrict our attention to diagonal training matrices for reducing the computational complexity without any significant performance loss. {To conclude about the choice of the training matrix, we assess the impact of using power levels to learn local CSI instead of using them to optimize the performance of Phase I. For this, we compare in Fig. 6b the scenario in which a diagonal training matrix is used to learn local CSI, with the scenario in which the best training matrix in the sense of the expected sum-rate (over Phase I). Global channel distribution information is assumed to be available in the latter scenario. The corresponding choice is feasible computationally speaking for small systems. }

\begin{figure}[h]
\raggedleft
	\begin{subfigure}[t]{.48 \linewidth}
		\centering
		\includegraphics[width=1.0 \linewidth]{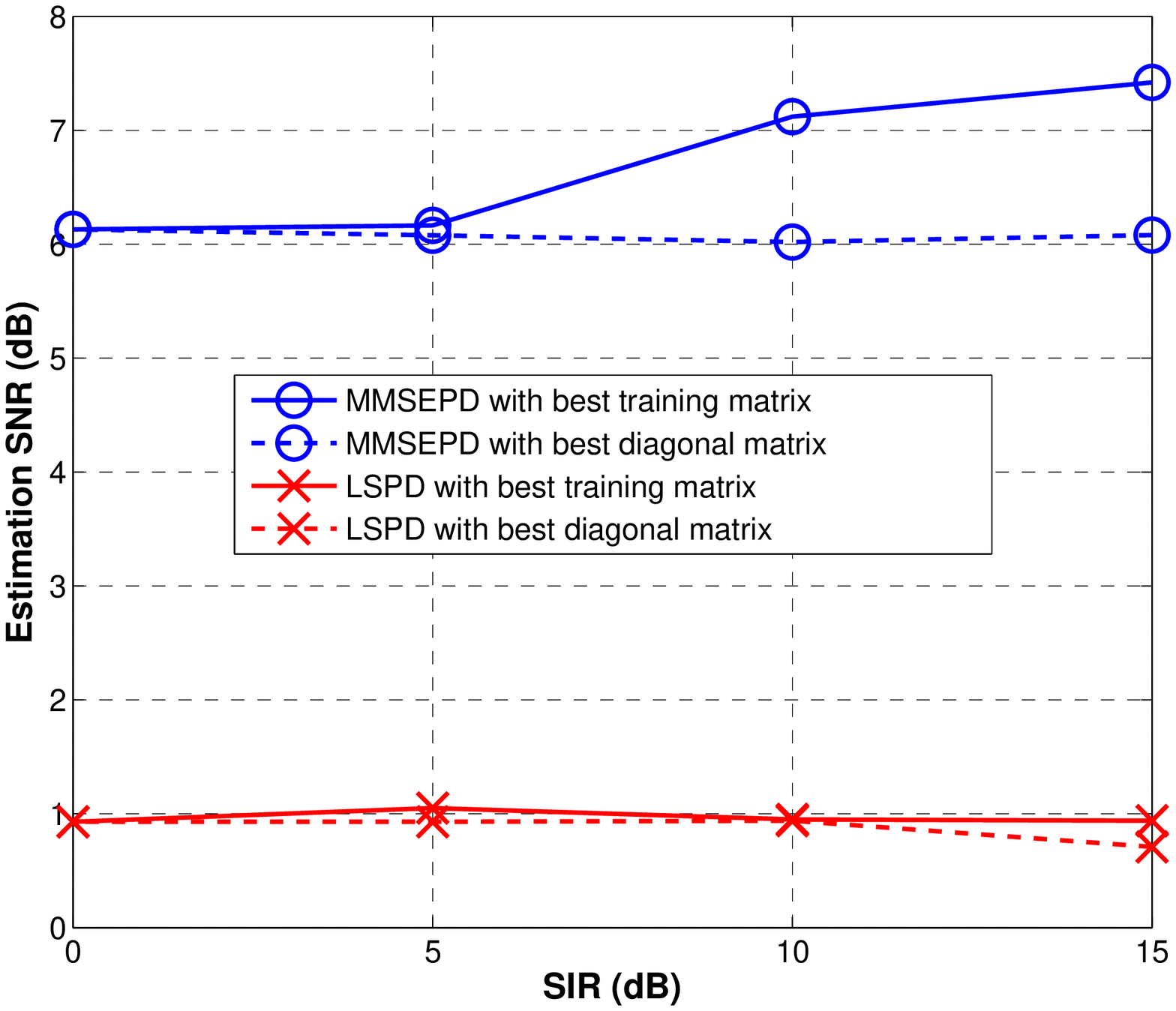}
		\caption{Scenario: $K=2$, $S=1$, and $\mathrm{SNR(dB)}=30$, $\epsilon=0$, $N=2$ quantization bits. Using a diagonal training matrix typically induces a small performance loss in terms of ESNR even in worst-case scenarios.}\label{fig:9}
	\end{subfigure}
\hspace{0.01\linewidth}
	\begin{subfigure}[t]{.48 \linewidth}
\raggedright
		\includegraphics[width=1.0 \linewidth]{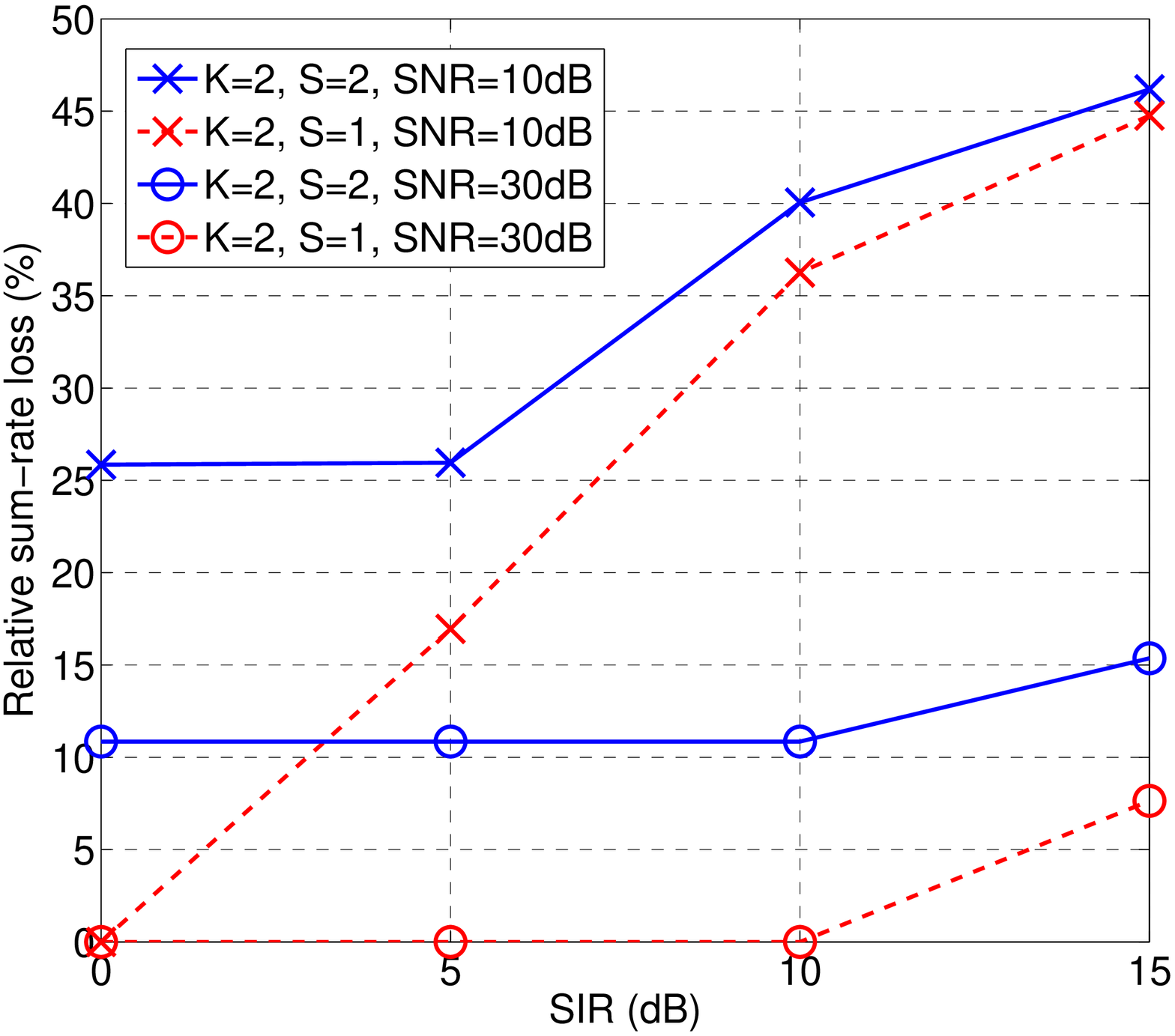}
		\caption{{Optimality loss induced in Phase I when using power levels to learn local CSI instead of maximizing the expected sum-rate. This loss may be influential on the average performance when the number of time-slots of the exploitation phase is not large enough.}}
	\end{subfigure}
	\caption{\small {Influence of the training matrix.}}
\end{figure}

Fig.~\ref{fig:phase-I-ESNR-SIR} represents for $K=2$, $S=1$, and $\mathrm{SNR(dB)}=30$, the estimation SNR (in dB) against the signal-to-interference ratio (SIR) in dB $\mathrm{SIR(dB)}$ which is defined here as 
\begin{equation}
\mathrm{SIR(dB)} = 10 \log_{10}\left(  \frac{\mathbb{E}(g_{11})}{\mathbb{E}(g_{21})} \right) = 10 \log_{10}\left(  \frac{\mathbb{E}(g_{22})}{\mathbb{E}(g_{12})} \right).  
\end{equation}
The three curves in red solid lines represent the MMSEPD estimator performance while the three curves in blue dashed line represent the LSPD estimator performance. The performance gap between MMSEPD and LSPD depends on the quality of the RSSI at the transmitters. When RS power measurements are quantized with $N=8$ bits and the feedback channel symbol error rate is $\epsilon = 1 \%$, the gap in dB is very close to $0$. Using MMSEPD instead of LSPD becomes much more relevant in terms of ESNR when the quality of feedback is degraded. Indeed, for $N=2$ bits and  $\epsilon = 10 \%$, the gap is about $5$ dB. Note that having a very small number of RSSI quantization bits and therefore significant feedback quality degradation may also occur in classical wireless systems where the feedback would be binary such as an ACK/NACK feedback. Indeed, an ACK/NACK feedback can be seen as the result of a $1-$bit quantization of the RSSI or SINR. The proposed technique might be used to coordinate the transmitters just based on this particular and rough feedback. Even though the noise on the RSSI is correlated with the signal and is not Gaussian, we observe that MMSEPD and LSPD (which can be seen as a zero-forcing solution) perform similarly when the noise becomes negligible. At last note that the $\mathrm{ESNR}$ is seen to be independent of the $\mathrm{SIR}$; this can be explained by the used training matrix, which is diagonal. 

The above comparison is conducted in terms of ESNR but not in terms of final utility. To assess the impact of Phase I on the exploration phase, two common utility functions are considered namely, the sum-rate and the \textit{sum-energy-efficiency} (sum-EE) which is defined as: 
\begin{equation}\label{eq:sum-EE-def}
u^{\text{sum-EE}}(\ul{p}_1,...,\ul{p}_K;\Gm) = \sum_{i=1}^K\frac{ 
\displaystyle{\sum_{s=1}^S} f(\mathrm{SINR}_i^s(\ul{p}_1,...,\ul{p}_K;\Gm))}{ \displaystyle{\sum_{s=1}^S}  p_{i}^s} .
\end{equation}
where the same notations as in (\ref{eq:sum-rate-def}) are used; $f$ is an efficiency function which represents the packet success rate or the probability of having no outage.  Indeed, the utility function $u^{\text{sum-EE}}$ corresponds to the ratio of the packet success rate to the consumed transmit power and has been used in many papers (see e.g., \cite{meshkati-jsac-2006}\cite{belmega-tsp-2011}\cite{buzzi-jstsp-2012}\cite{bacci-tsp-2013}\cite{haddad-sigmetrics-2014}). Here we choose the efficiency function of \cite{belmega-tsp-2011}: $f(x) = \exp\left(  - \frac{c}{x}\right)$ with $c=2^r - 1 =1$, $r$ being the spectral efficiency. Fig.~\ref{fig:phase-I-Delta-u-SIR} depicts for $K=2$, $S=1$, $N=2$, $\epsilon= 10 \%$ the \textit{average relative utility loss} $\Delta u$ in $\%$ against the SIR in dB. The average relative utility loss in $\%$ is defined by
\begin{equation}
\Delta u(\%) = 100 \mathbb{E} \left[ \frac{u(\ul{p}_1^\star,...,\ul{p}_K^\star; \Gm) - u(\widetilde{\ul{p}}_1^\star,...,\widetilde{\ul{p}}_K^\star; \Gm)}{u(\ul{p}_1^\star,...,\ul{p}_K^\star; \Gm)} \right].
\end{equation}
where $u(\ul{p}_1^\star,...,\ul{p}_K^\star; \Gm)$ is the best sum-utility which can be attained when every realization of $\Gm$ is \begin{wrapfigure}{r}{0.5\textwidth}
   \begin{center}
\includegraphics[width=0.48\textwidth]{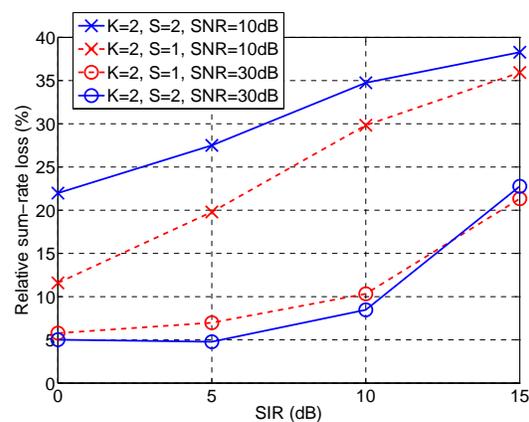}
  \end{center}
  \vspace{-0.55cm}
   \caption{\small {Optimality loss induced in Phase II when using power levels to exchange local CSI instead of maximizing the expected sum-rate. This loss may be influential on the average performance when the number of time-slots of the exploitation phase is not large enough.}}
    \label{fig:111}
\end{wrapfigure} known perfectly. The latter is obtained by performing exhaustive search over $100$ values equally spaced in $[0, P_{\max}]$ and this for each draw of $\Gm$; the average is obtained from $10^4$ independent draws of $\Gm$. The utility $u(\widetilde{\ul{p}}_1^\star,...,\widetilde{\ul{p}}_K^\star; \Gm)$ is also obtained with exhaustive search but by using either the LSPD or MMSEPD estimator and assuming Phase II to be perfect. Fig.~\ref{fig:phase-I-Delta-u-SIR} shows that even under severe conditions in terms observing the RS power at the transmitter, the MMSEPD and LSPD estimators have the same performance in terms of sum-rate.  This holds even though the gap in terms of ESNR is $5$ dB (see Fig.~\ref{fig:phase-I-ESNR-SIR}). Note that the relative utility loss is about $3\%$ showing that the sum-rate performance criterion is very robust against channel estimation errors. When one considers the sum-EE, the relative utility loss becomes higher and is the range $15\%-20\%$ and the gap between MMSEPD and LSPD becomes more apparent this time and equals about $5\%$. The observations made for the special setting considered here have been checked to be quite general and apply for more users, more bands, and other propagation scenarios: unless the RSSI is very noisy or when only an ACK/NACK-type feedback is available, the MMSEPD and LSPD estimators perform quite similarly. Since the MMSEPD estimator requires more knowledge and more computational complexity to be implemented, the LSPD estimator seems to be the best choice when the quality of RSSI is good as it is in current cellular and Wifi systems.

{To conclude this section, we provide the counterpart of Fig. 6b for phase 2 in Fig. 7. The scenario in which a diagonal training matrix is used to exchange local CSI, with the scenario in which power control is to maximize the expected sum-rate (over Phase II). But here, the expectation is not taken over local CSI since it is assumed to be known. The corresponding choice is feasible computationally speaking for small systems.}

%
\subsection{Comparison of quantization techniques for Phase II}
\label{sec:phase-II}

\begin{figure}[h]
	\centering
	\begin{subfigure}[t]{.48 \linewidth}
		\centering
		\includegraphics[width=1.0 \linewidth]{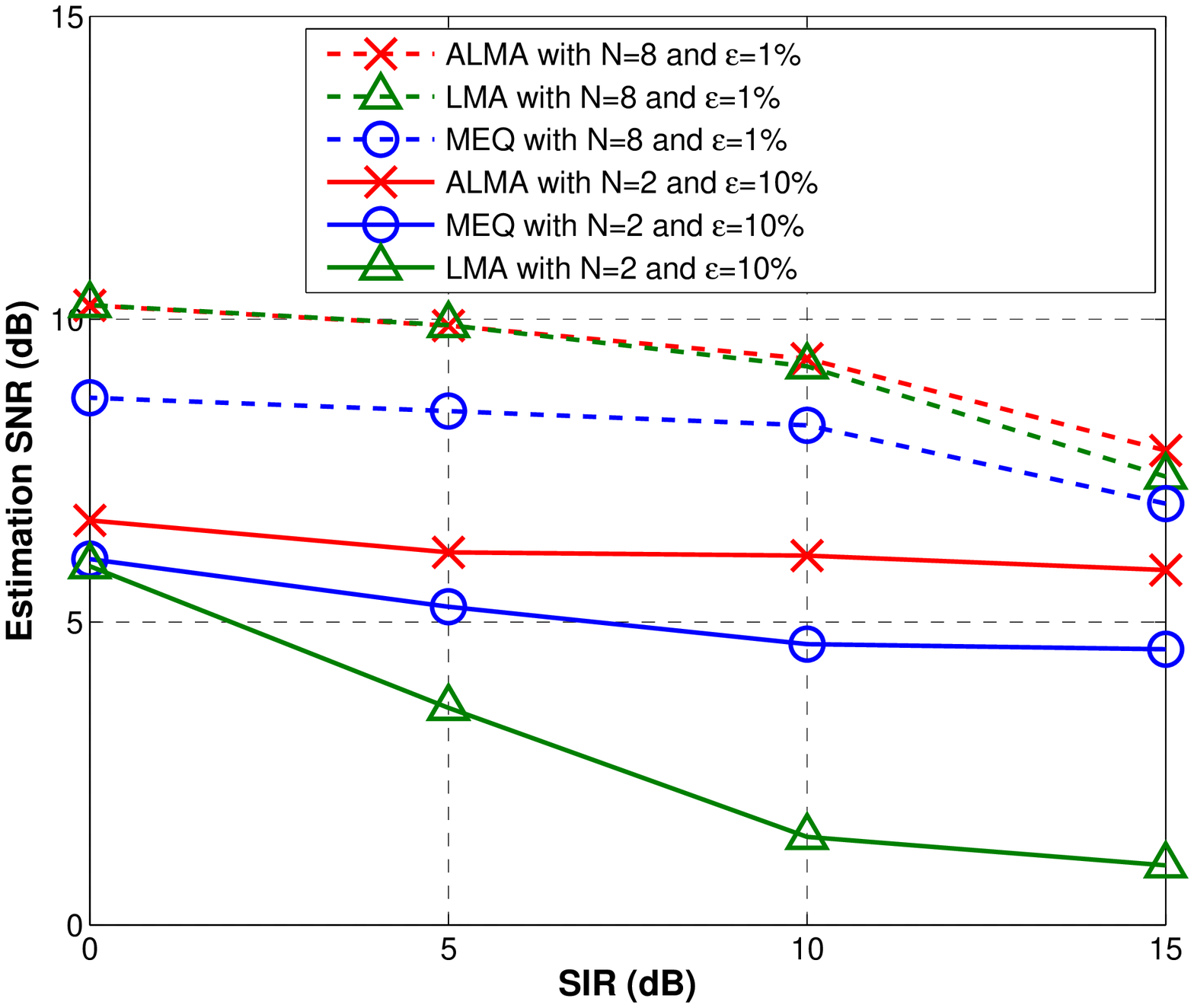}
		\caption{Performance measured by ESNR considering good (three top curves) and bad (three bottom curves) RSSI quality conditions.}\label{fig:phase-II-ESNR-SIR}
	\end{subfigure}
\hspace{0.01\linewidth}	\begin{subfigure}[t]{.48 \linewidth}
		\centering
		\includegraphics[width=1.0 \linewidth]{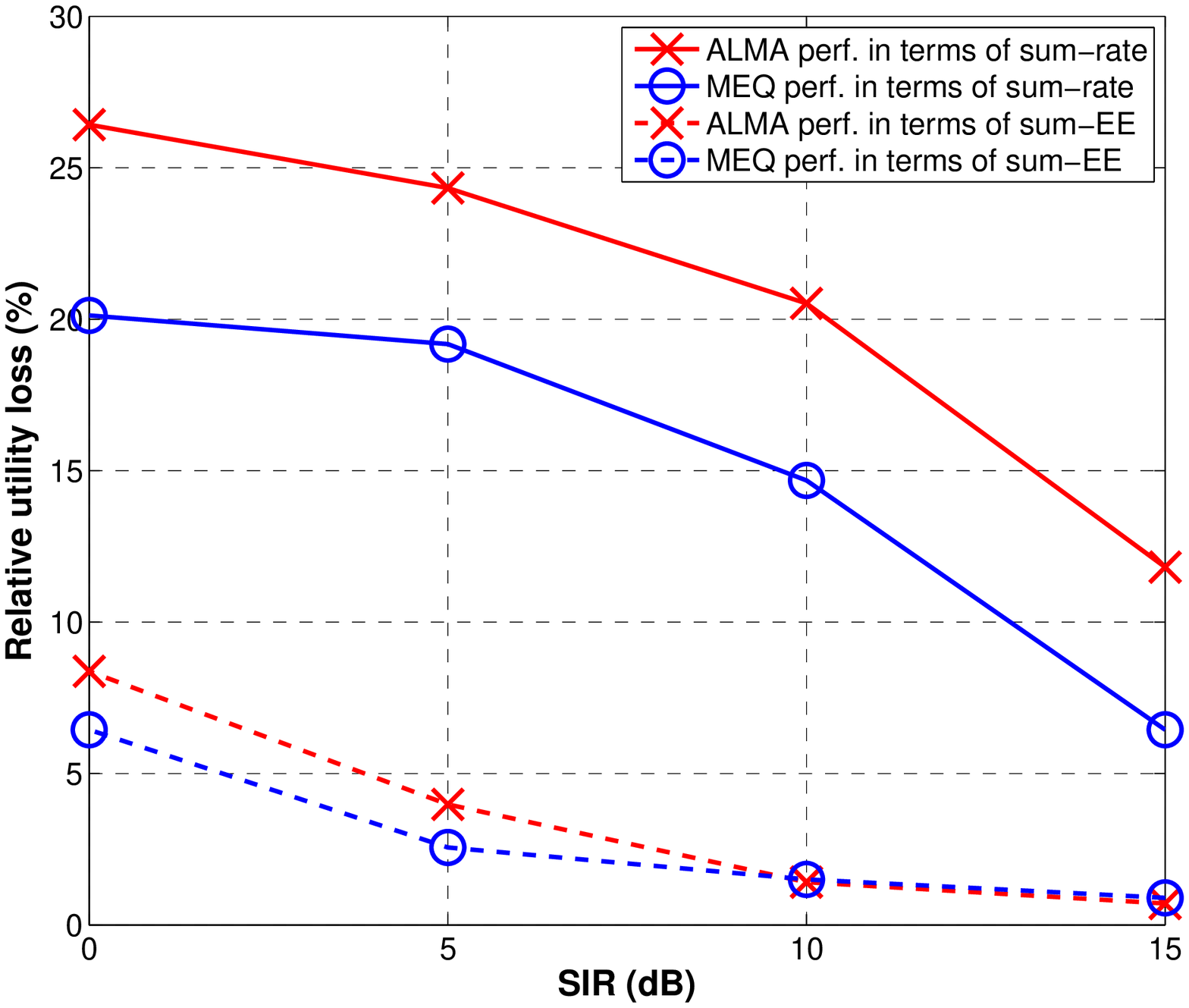}
		\caption{Performance measured by relative utility loss, with utility being the sum-EE or sum-rate.} \label{fig:phase-II-Delta-u-SIR}
	\end{subfigure}
	\caption{\small Performance analysis of conventional LMA, ALMA and MEQ assuming Phase I to be perfect.}
\end{figure}

In this section, we assume Phase I to be perfect. Again, this choice is made to isolate the impact of Phase II estimation techniques on the estimation SNR and the utility functions which are considered for the exploitation phase. When $L=2$ and we quantize with 1-bit, we map the smallest representative of the quantizer to the lowest power and the largest to the highest power level in $\mathcal{P}$ and the other element. If $L>2$, the power levels belong to the set $\left\{0, \frac{1}{L-1}P_{\max},\frac{2}{L-1}P_{\max},...,P_{\max}\right\}$ are picked and the representatives are mapped in the order corresponding to their value. In Phase II, the most relevant techniques to be determined is the quantization of the channel gains estimated through Phase I. 

For $K=2$ users, $S=1$ band, $L=2$ power levels, and $\mathrm{SNR(dB)}=30$, Fig.~\ref{fig:phase-II-ESNR-SIR} provides ESNR(dB) versus SIR(dB) for the three channel gain quantizers mentioned in this paper: ALMA, LMA, and MEQ. The three quantizers are assumed to quantize the channel gains with only $1$ bit. Since only two power levels are exploited over Phase II, this means that the local CSI exchange phase (Phase II) comprises $K$ time-slots. The three top curves of Fig.~\ref{fig:phase-II-ESNR-SIR} correspond to $N=8$ RS power quantization bits and $\epsilon = 1\%$ while the three bottom curves correspond to $N=2$ bits and $\epsilon = 10\%$. First of all, it is seen that the obtained values for ESNR are much lower than for Phase I. Even in the case where $N=8$ and $\epsilon = 1\%$, the ESNR is around $10$ dB whereas it was about $40$ dB for Phase I. This shows that the limiting factor for the global estimation accuracy will come from Phase II; additional comments on this point are provided at the end of this section. Secondly, Fig.~\ref{fig:phase-II-ESNR-SIR} shows the advantages offered by the proposed ALMA over the conventional LMA. 

Fig.~\ref{fig:phase-II-Delta-u-SIR} depicts for $K=2$, $S=1$, $N=8$, $\epsilon= 1 \%$ the average relative utility loss $\Delta u$ in $\%$ against the SIR in dB for ALMA and MEQ. The two bottom (resp. top) curves correspond to the sum-rate (resp. sum-EE). The relative utility loss is seen to be comparable to the one obtained for Phase I. Interestingly, MEQ is seen to induce less performance losses than ALMA, showing that the $\mathrm{ENSR}$ or distortion does not perfectly reflect the need in terms optimality for the exploration phase. This observation partly explains why we have chosen MEQ in Sec.~\ref{sec:global-perf-analysis} for the global performance evaluation; many other simulations (which involve various values for $K$, $N$, $S$, $\epsilon$, etc) not provided here confirm this observation. 


\begin{figure}[h]
	\centering
	\begin{subfigure}[t]{.48 \linewidth}
		\centering
		\includegraphics[width=1.0 \linewidth]{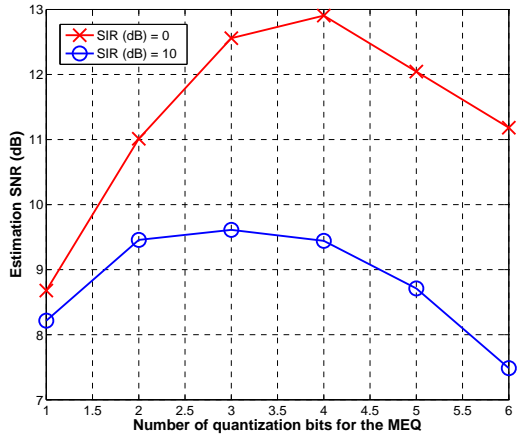}
		\caption{ESNR against quantization bits used in MEQ.} \label{fig:phase-II-ESNR-nb-bits-MEQ}
	\end{subfigure}
\hspace{0.01\linewidth}
	\begin{subfigure}[t]{.48 \linewidth}
		\centering
		\includegraphics[width=1.0 \linewidth]{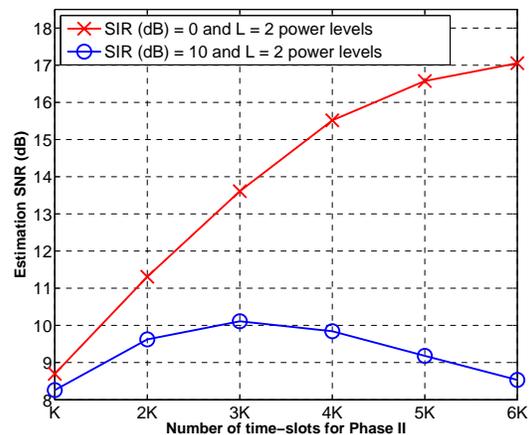}
		\caption{ESNR against $T_\mathrm{II}$}  \label{fig:phase-II-ESNR-nb-time-slots}
	\end{subfigure}
	\caption{\small The power level decoding scheme proposed in this paper is simple and has the advantage of being usable for the SINR feedback instead of RSSI feedback. However, the proposed scheme exhibits a limitation in terms of coordination ability when the inference is very low. The consequence of this is the existence of a maximum ESNR for Phase II. Here we observe that despite increasing the number of quantization bits or time slots used, the ESNR is bounded.}
\end{figure}

\vspace{-0.15cm}
\vspace{-0.15cm}
An important comment made previously is that Phase II constitutes the bottleneck in terms of estimation accuracy for the final global CSI estimate available for the exploitation phase. Here, we provide more details about this limitation. Indeed, even when the quality of the RSSI is good, the ESNR only reaches $10$ dB and even increasing the quantization bits by increasing the power modulation levels or time slots used does not improve the ESNR as demonstrated by the following figures.

For $N=8$ RS power quantization bits and $\epsilon = 1\%$, $\mathrm{SNR(dB)}=30 $, Fig.~\ref{fig:phase-II-ESNR-nb-bits-MEQ} shows the ESNR versus the number of channel quantization bits used by MEQ. It is seen that the ESNR reaches a maximum whether a high interference scenario ($\mathrm{SIR(dB)}=0$) or a low interference scenario ($\mathrm{SIR(dB)}=10$) is considered. In Fig.~\ref{fig:phase-II-ESNR-SIR}, the ESNR was about $9$ dB when the $1-$bit MEQ is used and the SIR equals $0$ dB. Here we retrieve this value and see that the ESNR can reach $13$ dB when the $4-$bit MEQ is implemented, meaning that $16$ power levels are used in Phase II. Now, when the SIR is higher, using the $2-$bit MEQ is almost optimal. If the RSSI quality degrades, then using only $1$ or $2$ bits for MEQ is always the best configuration.

Another approach would be to increase the number of channel gain quantization bits and still only use two power levels over Phase II by increasing the number of time-slots used in Phase II. Fig.~\ref{fig:phase-II-ESNR-nb-time-slots} assumes exactly the same setup as Fig.~\ref{fig:phase-II-ESNR-nb-bits-MEQ} but here it represents the ESNR as a function of the number of time-slots used in Phase II. Here again, an optimal number of time-slots appears for the same reason as for Fig.~\ref{fig:phase-II-ESNR-nb-bits-MEQ}. Both for Fig.~\ref{fig:phase-II-ESNR-nb-bits-MEQ} and Fig.~\ref{fig:phase-II-ESNR-nb-time-slots}, one might wonder why the ESNR is better when the interference is high. This is due to the fact that when the interference is very low, the decoding operation of the power levels of the others becomes less reliable. The existence of maximum points in Fig.~\ref{fig:phase-II-ESNR-nb-bits-MEQ} and Fig.~\ref{fig:phase-II-ESNR-nb-time-slots} precisely translates the tradeoff between the channel gain quantization noise and power level decoding errors. 

\vspace{-0.15cm}
\section{Conclusion}
\label{sec:conclusion}

First, we would like to remind a few comments about the scope and originality of this paper. One of the purposes of this paper is to show that the sole knowledge of the received power or SINR feedback is sufficient to recover global CSI. The proposed technique comprises two phases. Phase I allows each transmitter to estimate local CSI. Obviously, if there already exists a dedicated feedback or signalling channel which allows the transmitter to estimate local CSI, Phase I may be skipped. But even in the latter situation, the problem remains to know how to exchange local CSI among the transmitters. Phase II proposes a completely new solution for exchanging local CSI, namely using power modulation. Phase II is based in particular on a robust quantization scheme of the local channel gains. Phase II is therefore robust against perturbations on the received power measurements; it might even be used for $1-$bit RSSI which would correspond to an ACK/NACK-type feedback, showing that even a rough feedback channel may help the transmitters to coordinate. Note that the proposed technique is general and can be used to exchange and kind of information and not only local CSI.

Second, we summarize here a few observations of practical interest. For Phase I, two estimators have been proposed for Phase I: the LSPD and the MMSEPD estimators. Simulations show that using the MMSEPD requires some statistical knowledge and is more complex, but is well motivated when the RS power is quantized roughly or the feedback channel is very noisy. Otherwise, the use of the LSPD estimator is shown to be sufficient. During Phase II, transmitters exchange local CSI by encoding it onto their power level and using interference as a communication channel; Phase II typically requires $K$ time-slots at least (assuming all transmitters simultaneously communicate in Phase II), which makes $2K$ time-slots for the whole estimation procedure. This is typically the number of time-slots needed by IWFA to converge, when it converges. For Phase II, three estimation schemes are provided which are in part based on one of the two quantizers ALMA and MEQ; the quantizers are computed offline but are exploited online. MEQ seems to offer a good trade-off between complexity and performance in terms of sum-rate or sum-energy-efficiency. In contrast with Phase I in which the estimation SNR typically reaches $40$ dB for good RS power measurements, the estimation SNR in Phase II is typically around $10$ dB, showing that Phase II will constitute the bottleneck in terms of estimation quality of global CSI. This is due to fact that the cross channel gains may be small when they fluctuate (this would not occur in the presence of Rician fading), which generates power level decoding errors. As explained, one way of improving the estimation SNR over Phase II is to activate only one user at a time, but then the proposed power level decoding scheme would only apply to RSSI feedback and not to SINR feedback anymore. In Phase III, having global CSI, each transmitter can apply the BRD to the sum-utility instead of applying it to an individual utility as IWFA does, resulting in a significant performance improvement as seen from our numerical results.

\vspace{-0.15cm}
\vspace{-0.15cm}
\vspace{-0.15cm}
\appendices

\section{proof of proposition III.1}
{\bf Proof: } From Section II, we have $\ul{\widehat{\omega}}_{i} \in \Omega$ and  $\ul{\widetilde{\omega}}_{i} \in \Omega$, where $\Omega$ is a discrete set. Therefore, we can rewrite the likelihood probability $\Pr\left(\widetilde{\underline{\omega}}_{i}|\underline{g}_{i}\right)$ as follows
\vspace{-0.15cm}
\begin{equation}
\begin{array}{rl}
\Pr\left(\widetilde{\underline{\omega}}_{i}|\underline{g}_{i}\right) 
 & \stackrel{\left(a\right)}{=}   \underset{m=1}{\overset{M^{T_{\mathrm{I}}}}{\sum}}\Pr\left(\widetilde{\underline{\omega}}_{i}|\widehat{\underline{\omega}}_{i}=\ul{\mathrm{w}}_m\right)\Pr\left(\widehat{\underline{\omega}}_{i}=\ul{\mathrm{w}}_m|\underline{g}_{i}\right) \\
& \stackrel{\left(b\right)}{=}  \underset{m=1}{\overset{M^{T_{\mathrm{I}}}}{\sum}}\Pr\left(\widehat{\underline{\omega}}_{i}=\ul{\mathrm{w}}_m|\underline{g}_{i}\right)\overset{T_{\mathrm{I}}}{\underset{t=1}{\prod}}\Gamma\left(\widetilde{\omega}_{i}\left(t\right)|\widehat{\omega}_{i}\left(t\right)\right) \\
&\stackrel{\left(c\right)}{=}   \overset{T_{\mathrm{I}}}{\underset{t=1}{\prod}}\Gamma\left(\widetilde{\omega}_{i}\left(t\right)|Q_{\mathrm{RS}}\left(\ul{e}_t^{\mathrm{T}}   \mathbf{P}_{ \mathrm{I} } \underline{g}_{i}+\sigma^2\right)\right) \label{eq:ML2}
\end{array}\end{equation} where $\ul{e}_t$ is a column vector whose entries are zeros except for the $t^{\mathrm{th}}$. In (\ref{eq:ML2}), (a) holds as the estimation and feedback process $\underline{g}_{i}$ to $\widehat{\underline{\omega}}_{i}$ to $\widetilde{\underline{\omega}}_{i}$ (represented in Fig. 1) is Markovian, (b) holds because the DMC is separable and (c) holds because $\Pr\left(\widehat{\underline{\omega}}_{i}|\underline{g}_{i}\right)$ is a discrete delta function that is zero everywhere except when $Q_{\mathrm{RS}}\left( \mathbf{P}_{ \mathrm{I}} \underline{g}_{i}\right)=\widehat{\underline{\omega}}_{i}$. 

From (\ref{eq:ML2}), the set of the ML estimators can now be written as
\begin{equation}
\mathcal{G}_i^{\mathrm{ML}}=\left\{\arg \underset{\underline{g}_{i}}{\max} \overset{T_{1}}{\underset{t=1}{\prod}}\Gamma\left(\widetilde{\omega}_{i}\left(t\right)|Q_{\mathrm{RS}}\left(\ul{e}_t^{\mathrm{T}}   \mathbf{P}_{ \mathrm{I} } \underline{g}_{i}+\sigma^2\right)\right) \right\}
\label{eq:ML_final}
\end{equation} which is the first claim of our proposition. Now, we look at the LS estimator, which is know from (\ref{eq:LS}) to be
\begin{equation}
 \mathbf{P}_{ \mathrm{I}} \underline{g}_{i}^{\mathrm{LSPD}}+ \sigma^2 \ul{1}=\underline{\widetilde{\omega_{i}}}
\end{equation}
or equivalently:
\begin{equation}
\ul{e}_t^{\mathrm{T}}   \mathbf{P}_{ \mathrm{I} } \underline{g}_{i}^{\mathrm{LSPD}}+\sigma^2=\widetilde{\omega_{i}}\left(t\right)
\label{eq:sg}
\end{equation}


If for all $\ell$, $\arg \underset{k}{\max} \:\:\Gamma(\mathrm{w}_\ell |  \mathrm{w}_k )   =   \ell$, then the ML set can be evaluated based on (\ref{eq:ML_final}) as
\begin{equation}
\mathcal{G}_i^{\mathrm{ML}}=\left\{\underline{g}_{i}| \forall t, Q_{\mathrm{RS}}\left(\ul{e}_t^{\mathrm{T}}   \mathbf{P}_{ \mathrm{I} } \underline{g}_{i}+\sigma^2\right)=\widetilde{\omega_{i}}\left(t\right) \right\}
\label{eq:ML3}
\end{equation}
Therefore, we observe that if $\mathcal{G}_i^{\mathrm{ML}}$ is given as in (\ref{eq:ML3}), then from (\ref{eq:sg}), we have $ \underline{g}_{i}^{\mathrm{LSPD}} \in \mathcal{G}_i^{\mathrm{ML}}$, our second claim. \hfill $\blacksquare$

\section{proof of proposition III.2}
{\bf Proof: }
After the RSSI quantization, the $M^{T_{\mathrm{I}}}$ different levels of $\ul{\widehat{\omega}}_{i}$ or  $\ul{\widetilde{\omega}}_{i}$ are $\ul{\mathrm{w}}_{1},\ul{\mathrm{w}}_{2},..,\ul{\mathrm{w}}_{M^{T_{\mathrm{I}}}}$ forming the set $\Omega$. 

Define by $h:\Omega \to G$ which maps the observed RSSI feedback to a channel estimate, where $G:= \{\ul{\mathrm{g}}_1,\ul{\mathrm{g}}_2,...,\ul{\mathrm{g}}_{M^{T_{\mathrm{I}}}} \}$, such that $h(\ul{\mathrm{w}}_{m}) = \ul{\mathrm{g}}_{m}$. That is, when transmitter $i$ observes the RSSI feedback $\ul{\widetilde{\omega}}_{i}$ to be $\ul{\mathrm{w}}_{m}$, local channel estimate $\ul{\widetilde{g}}_i$ is $\ul{\mathrm{g}}_{m}$.

Based on the above definitions, we have that
\begin{equation}\begin{split}
\mathbb{E}\left[|\underline{\widetilde{g}}_i-\underline{g}_i|^2\right]
	=	{\sum_{n=1}^{M^{T_{\mathrm{I}}}}}\int_{\underline{{x}} \in \mathbb{R}_{\geq 0}^K } \Pr\left(\underline{\widetilde{g}}_i=\ul{\mathrm{g}}_{n}|\underline{{g}}_i=\ul{x}\right)\phi_{i}\left(\underline{{x}}\right)|\ul{\mathrm{g}}_{n}-\underline{x}|^2{\mathrm{d}\underline{{x}}}
	\label{eq:mmseproof1}
		\end{split}
	\end{equation}
The term $  \Pr\left(\underline{\widetilde{g}}_i=\ul{\mathrm{g}}_{n}|\underline{{g}}_i=\ul{x}\right)$ can be further expanded as
\begin{equation}
\begin{array}{rl}
 \Pr\left(\underline{\widetilde{g}}_i=\ul{\mathrm{g}}_{n}|\underline{{g}}_i=\ul{x}\right)  = \sum_{\ell=1}^{M^{T_{\mathrm{I}}}}\sum_{m=1}^{M^{T_{\mathrm{I}}}} \Pr\left(\underline{\widetilde{g}}_i=\ul{\mathrm{g}}_{n},\underline{\widetilde{\omega}}_i=\ul{\mathrm{w}}_\ell,\underline{\widehat{\omega}}_i=\ul{\mathrm{w}}_m|\ul{g}_i=\ul{x}\right) \\
 = \sum_{\ell=1}^{M^{T_{\mathrm{I}}}}\sum_{m=1}^{M^{T_{\mathrm{I}}}} \Pr\left(\underline{\widetilde{g}}_i=\ul{\mathrm{g}}_{n}|\underline{\widetilde{\omega}}_i=\ul{\mathrm{w}}_\ell \right) \Pr\left(\underline{\widetilde{\omega}}_i=\ul{\mathrm{w}}_\ell|\underline{\widehat{\omega}}_i=\ul{\mathrm{w}}_m\right)
 \Pr\left(\underline{\widehat{\omega}}_i=\ul{\mathrm{w}}_m|\ul{g}_i=\ul{x}\right) 
\end{array}
\label{eq:profcomb1}
\end{equation}

Now we know that the mapping $h()$ is deterministic and results in $h(\ul{\mathrm{w}}_{m}) = \ul{\mathrm{g}}_{m}$. Therefore, $\Pr\left(\underline{\widetilde{g}}_i=\ul{\mathrm{g}}_{n}|\underline{\widetilde{\omega}}_i=\ul{\mathrm{w}}_\ell \right) = \delta_{n,\ell}$, where $\delta_{n,\ell}$ is the Kronecker delta function such that $\delta_{n,\ell} = 0$ when $n \neq \ell$ and $\delta_{n,\ell}=1$ when $n=\ell$. Additionally, we also know that $\Pr\left(\underline{\widetilde{\omega}}_i=\ul{\mathrm{w}}_\ell|\underline{\widehat{\omega}}_i=\ul{\mathrm{w}}_m\right) =\prod_{t=1}^{T_I} \Gamma\left(\ul{\mathrm{w}}_\ell(t)|\ul{\mathrm{w}}_m(t)\right)$ by definition (where $\ul{\mathrm{w}}_m(t)$ is the $t$-th component of $\ul{\mathrm{w}}_m$) . This results in (\ref{eq:profcomb1}) being simplified to
\begin{equation}
 \Pr\left(\underline{\widetilde{g}}_i=\ul{\mathrm{g}}_{n}|\underline{{g}}_i\right) = \sum_{m=1}^{M^{T_{\mathrm{I}}}} \prod_{t=1}^{T_I} \Gamma\left(\ul{\mathrm{w}}_n(t)|\ul{\mathrm{w}}_m(t)\right) \Pr\left(\underline{\widehat{\omega}}_i=\ul{\mathrm{w}}_m|\ul{g}_i=\ul{x}\right)
\label{eq:profcomb2} 
\end{equation}
Recall that $\underline{\widehat{\omega_{i}}} = Q_{\mathrm{RS}}\left( \mathbf{P}_{ \mathrm{I}} \underline{g_{i}}\right)$ by definition of the quantizer. Define by 
\begin{equation}
\mathcal{G}_{m}:=\left\{\underline{x} \in \mathbb{R}_{\geq 0}^K: \     Q_{\mathrm{RS}}\left(\mathbf{P}_{ \mathrm{I} } \underline{x}+\sigma^2\ul{1}\right) =\ul{\mathrm{w}}_m \right\}
\end{equation} resulting in
\begin{equation}
\Pr \left(\underline{\widehat{\omega}}_i=\ul{\mathrm{w}}_m|\ul{g}_i=\ul{x}\right) = \left\{ \begin{array}{ll}
1 & \mathrm{if} \,\, \ul{x} \in  \mathcal{G}_{m}\\
0 & \mathrm{if} \,\, \ul{x} \notin  \mathcal{G}_{m}
\end{array}\right.
\label{eq:probofquant}
\end{equation}
Now, we can simplify (\ref{eq:mmseproof1}) using (\ref{eq:probofquant}) and (\ref{eq:profcomb2}) into
\begin{equation}\begin{split}
&\mathbb{E}\left[|\underline{\widetilde{g}}_i-\underline{g}_i|^2\right] \\
	=	&{\sum_{n=1}^{M_{T_{\mathrm{I}}}}}{\sum_{m=1}^{M_{T_{\mathrm{I}}}}} \overset{T_{1}}{\underset{t=1}{\prod}}\Gamma\left(\ul{\mathrm{w}}_n(t)|\ul{\mathrm{w}}_m(t)\right) \int_{\mathcal{G}_{m}}\phi_{i}\left(\ul{x}\right)|\ul{\mathrm{g}}_{n}-\ul{x}|^2{\mathrm{d}\ul{x}}
	\label{eq:mmseproof2}
		\end{split}
	\end{equation}

For a fixed DMC, we can find the $\ul{{g}}_{i}^{\mathrm{MMSE}}$ which will minimize the distortion by taking the derivative of the distortion over $\ul{\mathrm{g}}_{n}$: 
\begin{equation}
\begin{split}
\frac{\partial \mathbb{E}\left[|\underline{\widetilde{g}}_i-\underline{g}_i|^2\right]}{\partial \ul{\mathrm{g}}_{n} }=2{\sum_{m=1}^{M_{T_{\mathrm{I}}}}} \overset{T_{1}}{\underset{t=1}{\prod}}\Gamma\left(\ul{\mathrm{w}}_n(t)|\ul{\mathrm{w}}_m(t)\right)\int_{\mathcal{G}_{m}}\phi_{i}\left(\ul{x}\right)\left(\ul{\mathrm{g}}_{n}-\underline{x}\right){\mathrm{d}\underline{{x}}}\label{eq:deri}
\end{split}
\end{equation}
To minimize distortion, this derivative should be equal to zero. The $\ul{\mathrm{g}}_{n}$ minimizing the distortion is by definition, the MMSE of the channel given $\ul{\widetilde{\omega}}_i = \ul{\mathrm{w}}_{n}$. Therefore by rearranging (\ref{eq:deri}), we can find the expression for the MMSE given in the proposition III.2. \hfill $\blacksquare$
%

\vspace{-0.15cm}
\section{Calculations for the ALMA}
As defined in the main text, $\widetilde{g}_{ji}^{k} \in \{v_{ji,1},...,v_{ji,R}\}$ and the p.d.f. of $\widetilde{g}_{ji}$ is denoted by $\gamma_{ji}$ in general. Note that when $\widetilde{g}_{ji}$ belongs to a discrete set, we can replace the integrals and $\gamma_{ji}$ with a sum and discrete probability function without any significant alteration to our results and calculations. Denoting the p.d.f of $g_{ji}$ by $\phi_{ji}$, the distortion between $g_{ji}$ and $\widetilde{g}_{ji}^{k}$ can be written as
\vspace{-0.15cm}
{\begin{equation}\begin{split}
\mathbb{E}[\left(g_{ji}-\widetilde{g}_{ji}^{k}\right)^{2}]
	=	\sum_{r=1}^{R}\underset{x,\widetilde{x} \in \mathbb{R}_{\geq 0}}{\int} \Pr\left(\widetilde{g}_{ji}^{k}=v_{ji,r}|\widetilde{g}_{ji}=\widetilde{x}\right)\gamma_{ji}\left(\widetilde{x}|x\right)\phi_{ji}\left(x\right)
	\left(x-v_{ji,r}\right)^{2}\mathrm{d}x \mathrm{d}\widetilde{x}\\
	\label{eq:ac1}
	\end{split}\end{equation}
	\vspace{-0.15cm}
	 which is the distortion observed by transmitter $k$ when transmitter $i$ communicates $g_{ji}$ in Phase II.
	 As the transmitter $i$ estimates $g_{ji}$ as $\widetilde{g}_{ji}$, the quantization operation $Q_i^{\mathrm{II}}$ is performed resulting in $\widetilde{g}_{ji}$ being quantized into a certain representative $v_{ji,n}$, if $\widetilde{g}_{ji} \in [u_{ji,n},u_{ji,n+1})$. Given that the transmitter $i$ operates at a power level corresponding to $v_{ji,n}$, the transmitter $k$ will decode $v_{ji,r}$ with a probability $\pi(r|n)$ as defined in Section IV. Now we can expand the term $\Pr\left(\widetilde{g}_{ji}^{k}=v_{ji,r}|\widetilde{g}_{ji}=\widetilde{x}\right)$ in the following manner.
\vspace{-0.15cm}
\begin{equation}\begin{split}
\Pr\left(\widetilde{g}_{ji}^{k}=v_{ji,r}|\widetilde{g}_{ji}=\widetilde{x}\right) 
&= \sum_{n=1}^R \Pr\left(\widetilde{g}_{ji}^{k}=v_{ji,r}|Q_i^{\mathrm{II}}(\widetilde{g}_{ji})=v_{ji,n}\right) \Pr\left(Q_i^{\mathrm{II}}(\widetilde{g}_{ji})=v_{ji,n}|\widetilde{g}_{ji}=\widetilde{x}\right)\\
&= \sum_{n=1}^R \pi(r|n)  \Pr\left(Q_i^{\mathrm{II}}(\widetilde{g}_{ji})=v_{ji,n}|\widetilde{g}_{ji}=\widetilde{x}\right)
\end{split}
\label{eq:probexpand}
\end{equation} where we know
\begin{equation}
\Pr\left(Q_i^{\mathrm{II}}(\widetilde{g}_{ji})=v_{ji,n}|\widetilde{g}_{ji}=\widetilde{x}\right) = \left\{ \begin{array}{ll}
1 & \mathrm{if} \,\, \widetilde{x} \in  [u_{ji,n},u_{ji,n+1})\\
0 & \mathrm{if} \,\, \widetilde{x} \notin  [u_{ji,n},u_{ji,n+1})
\end{array}\right.
\label{eq:probofquant2}
\end{equation}

Substituting (\ref{eq:probofquant2}) and (\ref{eq:probexpand}) in (\ref{eq:ac1}), we get
\begin{equation}\begin{split}
 \mathbb{E}[\left(g_{ji}-\widetilde{g}_{ji}^{k}\right)^{2}] 
	=\sum_{n=1}^{R}{\sum_{r =1}^R} \pi_{ji} \left(r |n\right) \underset{x = 0}{\overset{\infty}{\int}} \hspace{.2cm}\underset{\widetilde{x} = u_{ji,n}} {\overset{u_{ji,n+1}}{\int}} 
\gamma_{ji}\left(\widetilde{x}|x\right)\phi_{ji}\left(x\right)\left(x-v_{ji,r}\right)^{2} \mathrm{d}x \mathrm{d}\widetilde{x}.
\end{split}\end{equation}

For fixed transition levels $u_{ji,n}$, the optimum representatives $v_{ji,r'}$ are obtained by setting the partial derivatives of the distortion $E[\left(g_{ji}-\widetilde{g}_{ji}^{k}\right)^{2}]$, with respect to $v_{ji,r'}$, to zero. That is
\[\begin{split}
\frac{\partial {\mathbb{E}[\left(g_{ji}-\widetilde{g}_{ji}^{k}\right)^{2}]}}{\partial v_{ji,r'}} =  {\sum_{n=1}^R}\pi_{ji}\left(r' |n\right)
 \int_{x=0}^{\infty}\int_{\widetilde{x}=u_{ji,n}}^{u_{ji,n+1}} 2\gamma_{ji}( \widetilde{x} | x  )  \phi_{ji}(x)\left(x-v_{ji,r'}\right)\mathrm{d}x \mathrm{d}\widetilde{x}
= 0 
\end{split}\]
which results in
\begin{equation}
v_{ji,r'}  =\frac{{\displaystyle\sum_{n=1}^R}\pi_{ji} \left(r' |n\right)\displaystyle\int_{x=0}^{\infty}
\displaystyle\int_{ \widetilde{x}= u_{ji,n}  }^{u_{ji,n+1}  } x
\gamma_{ji}( \widetilde{x} |x  )  \phi_{ji}(x)
\mathrm{d}\widetilde{x} \mathrm{d}x }{{\displaystyle\sum_{n=1}^R} \pi_{ji} \left(r' |n\right) \displaystyle\int_{x=0}^{\infty}\displaystyle\int_{\widetilde{x}=u_{ji,n} }^{u_{ji,n+1}}
\gamma_{ji}(\widetilde{x} |x  )  \phi_{ji}(x)
\mathrm{d}\widetilde{x} \mathrm{d}x }. 
\label{eq:ALMr}
\end{equation} 
For fixed representatives $v_{ji,r}$, the optimum transition levels $u_{ji,n'}$ are obtained by setting the partial derivatives of the distortion $E[\left(g_{ji}-\widetilde{g}_{ji}^{k}\right)^{2}]$ with respect to $u_{ji,n'}$, to zero. We use the second fundamental theorem of calculus, i.e., $\frac{\mathrm{d} }{\mathrm{d}x} \int_a^x f(t) \mathrm{d}t = f(x)$ to obtain $u_{ji,n'}$ for all $n' \in \{2,..,R\}$ as
\begin{equation}\begin{split}
\frac{\partial {\mathbb{E}[\left(g_{ji}-\widetilde{g}_{ji}^{k}\right)^{2}]}}{\partial u_{ji,n'}}
={\sum_{r=1}^R} (\pi_{ji}\left(r|n'-1\right)-\pi_{ji}\left(r|n'\right))\int_{0}^{\infty}\gamma_{ji}(u_{ji,n'}   | x  )  \phi_{ji}(x)\left({v_{ji,r}-x}\right)^2 \mathrm{d}x 
=0
\label{eq:ALM}
\end{split}\end{equation}
with $u_{ji,1}=0$ and $u_{ji,R+1}= \infty$ as the boundary conditions. Solving the above conditions is very difficult as the variable to solve is inside the integral as an argument of $\gamma$. Therefore we consider the special case where $\gamma_{ji}(\widehat{x}|x) = \delta(x-\widehat{x}) $ where $\delta$ is the Dirac delta function which is $0$ at all points except at $0$ and whose integral around a neighborhood of $0$ is $1$. This corresponds to the case where the channel is perfectly estimated after phase I. This directly transforms (\ref{eq:ALMr}) to (\ref{eq:alm_r1}) of the ALMA, and we can simplify (\ref{eq:ALM}) into
\vspace{-0.15cm}
\begin{equation}\begin{split}
0 = {\sum_{r=1}^R} \left[\pi_{ji}\left(r|n'-1\right)-\pi_{ji}\left(r|n'\right) \right] \phi_{ji}(u_{ij,n'})\left(v_{ji,r}-u_{ij,n'}\right)^2
\label{eq:ALM1}
\end{split}\end{equation}
We have ${\sum_{r=1}^R} \left[\pi_{ji}\left(r|n'-1\right)-\pi_{ji}\left(r|n'\right) \right] \left(u_{ij,n'}\right)^2=0$ since ${\sum_{r=1}^R} \pi_{ji}\left(r|n'\right) =1$, resulting in 
\begin{equation}\begin{split}
u_{ij,n'} = \frac{{\sum_{r=1}^R} \left[\pi_{ji}\left(r|n'-1\right)-\pi_{ji}\left(r|n'\right) \right] v_{ji,r}^2 }{2 {\sum_{r=1}^R} \left[\pi_{ji}\left(r|n'-1\right)-\pi_{ji}\left(r|n'\right) \right]v_{ji,r} }
\label{eq:ALM2}
\end{split}\end{equation}
which is (\ref{eq:alm_t1}) used in the ALMA. \hfill $\blacksquare$

\def\bibfont{\footnotesize}
\bibliographystyle{unsrt}
\vspace{-0.15cm}\vspace{-0.15cm}

\end{document}